\documentclass[pra,showpacs,preprintnumbers,amsmath,amssymb,superscriptaddress]{revtex4}

\usepackage{bm}
\usepackage{graphicx}
\usepackage{amsbsy}
\usepackage{amsmath}
\usepackage{amsfonts}
\usepackage{amsthm}

\begin{document}

\theoremstyle{plain}
\newtheorem{theorem}{Theorem}
\newtheorem{lemma}[theorem]{Lemma}
\newtheorem{corollary}[theorem]{Corollary}
\newtheorem{conjecture}[theorem]{Conjecture}
\newtheorem{proposition}[theorem]{Proposition}
\newcommand{\PT}{\mathrm{PTL}}
 \newcommand{\C}{\mathbb{C}}
  \newcommand{\F}{\mathbb{F}}
  \newcommand{\N}{\mathbb{N}}
  \renewcommand{\P}{\mathbb{P}}
  \newcommand{\R}{\mathbb{R}}
  \newcommand{\Z}{\mathbf{Z}}
  \renewcommand{\a}{\mathbf{a}}
  \renewcommand{\b}{\mathbf{b}}
  \renewcommand{\i}{\mathbf{i}}
  \renewcommand{\j}{\mathbf{j}}
  \renewcommand{\c}{\mathbf{c}}
  \newcommand{\e}{\mathbf{e}}
  \newcommand{\f}{\mathbf{f}}
  \newcommand{\g}{\mathbf{g}}
  \newcommand{\gl}{\mathbf{GL}}
  \newcommand{\m}{\mathbf{m}}
  \newcommand{\n}{\mathbf{n}}
  \newcommand{\bNP}{\mathbf{NP}}
  \newcommand{\bNPC}{\mathbf{NPC}}
  \newcommand{\p}{\mathbf{p}}
  \newcommand{\bP}{\mathbb{P}}
  \newcommand{\bPo}{\mathbf{Po}}
  \newcommand{\q}{\mathbf{q}}
  \newcommand{\s}{\mathbf{s}}
  \newcommand{\bt}{\mathbf{t}}
  \newcommand{\T}{\mathbf{T}}
  \newcommand{\U}{\mathbf{U}}
  \renewcommand{\u}{\mathbf{u}}
  \renewcommand{\v}{\mathbf{v}}
  \newcommand{\V}{\mathbf{V}}
  \newcommand{\w}{\mathbf{w}}
  \newcommand{\W}{\mathbf{W}}
  \newcommand{\x}{\mathbf{x}}
  \newcommand{\X}{\mathbf{X}}
  \newcommand{\y}{\mathbf{y}}
  \newcommand{\Y}{\mathbf{Y}}
  \newcommand{\z}{\mathbf{z}}
  \newcommand{\0}{\mathbf{0}}
  \newcommand{\1}{\mathbf{1}}
  \newcommand{\Gam}{\mathbf{\Gamma}}
  \newcommand{\bGamma}{\Gam}
  \newcommand{\Lam}{\mathbf{\Lambda}}
  \newcommand{\lam}{\mbox{\boldmath{$\lambda$}}}
  \newcommand{\bA}{\mathbf{A}}
  \newcommand{\bB}{\mathbf{B}}
  \newcommand{\bC}{\mathbf{C}}
  \newcommand{\bH}{\mathbf{H}}
  \newcommand{\bL}{\mathbf{L}}
  \newcommand{\bM}{\mathbf{M}}
  \newcommand{\bc}{\mathbf{c}}
  \newcommand{\cA}{\mathcal{A}}
  \newcommand{\cB}{\mathcal{B}}
  \newcommand{\cC}{\mathcal{C}}
  \newcommand{\cD}{\mathcal{D}}
  \newcommand{\cE}{\mathcal{E}}
  \newcommand{\cF}{\mathcal{F}}
  \newcommand{\cG}{\mathcal{G}}
  \newcommand{\cH}{\mathcal{H}}
  \newcommand{\cI}{\mathcal{I}}
  \newcommand{\cL}{\mathcal{L}}
  \newcommand{\cM}{\mathcal{M}}
  \newcommand{\cO}{\mathcal{O}}
  \newcommand{\cP}{\mathcal{P}}
  \newcommand{\cR}{\mathcal{R}}
  \newcommand{\cS}{\mathcal{S}}
  \newcommand{\cT}{\mathcal{T}}
  \newcommand{\cU}{\mathcal{U}}
  \newcommand{\cV}{\mathcal{V}}
  \newcommand{\cW}{\mathcal{W}}
  \newcommand{\cX}{\mathcal{X}}
  \newcommand{\cY}{\mathcal{Y}}
  \newcommand{\cZ}{\mathcal{Z}}
  \newcommand{\rE}{\mathrm{E}}
  \newcommand{\rH}{\mathrm{H}}
  \newcommand{\rU}{\mathrm{U}}
  \newcommand{\Cp}{\mathrm{Cap\;}}
  \newcommand{\lan}{\langle}
  \newcommand{\ran}{\rangle}
  \newcommand{\an}[1]{\lan#1\ran}
  \def\diag{\mathop{{\rm diag}}\nolimits}
  \newcommand{\hs}{\hspace*{\parindent}}
  \newcommand{\cl}{\mathop{\mathrm{Cl}}\nolimits}
  \newcommand{\tr}{\mathop{\mathrm{Tr}}\nolimits}
  \newcommand{\Aut}{\mathop{\mathrm{Aut}}\nolimits}
  \newcommand{\argmax}{\mathop{\mathrm{arg\,max}}}
  \newcommand{\Eig}{\mathop{\mathrm{Eig}}\nolimits}
  \newcommand{\Gr}{\mathop{\mathrm{Gr}}\nolimits}
  \newcommand{\Fr}{\mathop{\mathrm{Fr}}\nolimits}
  \newcommand{\trans}{^\top}
  \newcommand{\opt}{\mathop{\mathrm{opt}}\nolimits}
  \newcommand{\per}{\mathop{\mathrm{perm}}\nolimits}
  \newcommand{\haff}{\mathrm{haf\;}}
  \newcommand{\perio}{\mathrm{per}}
  \newcommand{\conv}{\mathrm{conv\;}}
  \newcommand{\Cov}{\mathrm{Cov}}
  \newcommand{\inter}{\mathrm{int}}
  \newcommand{\dist}{\mathrm{dist}}
  \newcommand{\inn}{\mathrm{in}}
  \newcommand{\grank}{\mathrm{grank}}
  \newcommand{\mrank}{\mathrm{mrank}}
  \newcommand{\krank}{\mathrm{krank}}
  \newcommand{\out}{\mathrm{out}}
  \newcommand{\orient}{\mathrm{orient}}
  \newcommand{\Pu}{\mathrm{Pu}}
  \newcommand{\rdc}{\mathrm{rdc}}
  \newcommand{\range}{\mathrm{range\;}}
  \newcommand{\Sing}{\mathrm{Sing\;}}
  \newcommand{\topo}{\mathrm{top}}
  \newcommand{\undir}{\mathrm{undir}}
  \newcommand{\Var}{\mathrm{Var}}
  \newcommand{\rC}{\mathrm{C}}
  \newcommand{\rF}{\mathrm{F}}
  \newcommand{\rL}{\mathrm{L}}
  \newcommand{\rM}{\mathrm{M}}
  \newcommand{\rO}{\mathrm{O}}
  \newcommand{\rR}{\mathrm{R}}
  \newcommand{\rS}{\mathrm{S}}
  \newcommand{\rT}{\mathrm{T}}
  \newcommand{\pr}{\mathrm{pr}}
  \newcommand{\inte}{\mathrm{int}}
  \newcommand{\inv}{\mathrm{inv}}
  \newcommand{\pers}{\per_s}
  \newcommand{\del}{\boldsymbol{\delta}}
  \renewcommand{\alph}{\boldsymbol{\alpha}}
  \newcommand{\bet}{\boldsymbol{\beta}}
  \newcommand{\gam}{\boldsymbol{\gamma}}
  \newcommand{\sig}{\boldsymbol{\sigma}}
  \newcommand{\zet}{\boldsymbol{\zeta}}
  \newcommand{\et}{\boldsymbol{\eta}}
  \newcommand{\xit}{\boldsymbol{\xi}}
  \newcommand{\perm}{\mathrm{perm\;}}
  \newcommand{\adj}{\mathrm{adj\;}}
  \newcommand{\rank}{\mathrm{rank\;}}
  \newcommand{\set}[1]{\{#1\}}
  \newcommand{\spec}{\mathrm{spec\;}}
  \newcommand{\supp}{\mathrm{supp\;}}
  \newcommand{\Tr}{\mathrm{Tr\;}}
  \newcommand{\vol}{\text{vol}}
  
  \def\be{\begin{eqnarray}}
\def\ee{\end{eqnarray}}
\def\bee{\begin{eqnarray*}}
\def\eee{\end{eqnarray*}}
 \def\pmx{\begin{pmatrix}}
 \def\emx{\end{pmatrix}}
 \def\bsq{\begin{subequations}}
\def\esq{\end{subequations}}
 \def\zero{ \begin{smallmatrix}  0 & 0 \\ 0 & 0 \end{smallmatrix} }
 \def\zerocol{  \begin{smallmatrix} 0 \\ 0 \end{smallmatrix} }
  \def\zerorow{  \begin{smallmatrix} 0 & 0 \end{smallmatrix} }
\def\e{\epsilon}
 \def\iff{\Leftrightarrow}
   \def\imp{\Rightarrow}
        \def\tr{\hbox{\rm Tr} \, }
          \def\ran{\hbox{\rm Range ~}}
          \def\trp{\hbox{\rm Tr} }
     
     \def\half{{\textstyle \frac{1}{2}}}
     \def\nn{\nonumber}
     \def\rank{{\rm rank}}
          \def\range{{\rm range}~}
\def\cB{{\cal B}}
\def\cE{{\cal E}}

\def\cH{{\cal H}}
\def\cD{{\cal D}}
\def\cV{{\cal V}}
\def\cB{{\cal B}}
\def\cP{{\cal P}}
\def\cF{{\cal F}}
\def\cG{{\cal G}}
\def\cP{{\cal P}}

\def\eps{\epsilon}

 \def\halfs{{ \frac{1}{2}}}
\def\ds{\displaystyle}
 \def\ts{\textstyle}
\def\bra{\langle}
\def\ket{\rangle}
\def\kb{ \ket \bra }
\def\trho{ \tilde{\rho} }
\def\rt2{ \frac{1}{\sqrt{2}} }
\def\lraw{\leftrightarrow}
\def\raw{\rightarrow}
\def\uparw{\uparrow}
\def\dwnarw{\downarrow}
\def\wh{\widehat}
           \def\wtd{\widetilde}

\theoremstyle{definition}
\newtheorem{definition}{Definition}

\theoremstyle{remark}
\newtheorem*{remark}{Remark}
\newtheorem{example}{Example}

\title{The minimum entropy output of a quantum channel is locally additive}

\author{Gilad Gour}\email{gour@ucalgary.ca}
\affiliation{Institute for Quantum Information Science and
Department of Mathematics and Statistics,
University of Calgary, 2500 University Drive NW,
Calgary, Alberta, Canada T2N 1N4}

\author{Shmuel Friedland}\email{friedlan@uic.edu}
\affiliation{Department of Mathematics, Statistics and Computer Science
University of Illinois at Chicago,
851 S. Morgan Street,
Chicago, IL 60607-7045}

\begin{abstract}
We show that the minimum von-Neumann entropy output of a quantum channel is locally additive. 
Hasting's counterexample for the additivity conjecture, makes this result quite surprising.
In particular, it indicates that the non-additivity of the minimum entropy output is a \emph{global} effect
of quantum channels.
\end{abstract}

\maketitle

\section{Introduction}

One of the most fundamental questions in quantum information concerns with the amount of information that can be transmitted reliably through a quantum channel. Despite of the significant progress in recent 
years~\cite{Holevo06,KR01,Kin02,Kin03,FHMV,Sho02,HHH09,AHW00,Shor04,AB04,Has09,Yard08,Shor04,Brandao,Fuk10,Bra11}, as pointed out in~\cite{Bra11},
this question remained surprisingly wide open. The main reason for that is related to
the additivity nature of the classical or quantum capacities of quantum channels to transmit information~\cite{Holevo06}. Recently, it was shown that both the Holevo expression for the classical capacity~\cite{Has09} and the quantum capacity~\cite{Yard08} are not additive in general. The additivity of the Holevo expression for the classical capacity was an open problem for more than a decade and was shown by Shor~\cite{Shor04} to be equivalent to three other additivity conjectures; namely, the additivity of entanglement of formation, the strong super-additivity of entanglement of formation, and the additivity of the minimum entropy output of a quantum channel. 

In~\cite{Has09} Hastings gave a counterexample to the last of
the above additivity conjectures and thereby proved that they are all false. Hastings counterexamples (see also~\cite{Brandao}) exist in very high dimensions and an estimate of these extremely high dimensions can be found in~\cite{Fuk10}. Earlier, in~\cite{Shor04}, Shor pointed out that if the additivity conjectures were true, perhaps the first step towards proving them would be to prove local additivity. We show here that this local additivity conjecture is indeed true, despite the existence of counterexamples to the original additivity conjectures. Our results therefore demonstrate that the counterexamples to the original additivity conjecture exhibit a global effect of quantum channels.
  
As we pointed out in Appendix B of~\cite{FGA},
both the local and global additivity conjectures are false over the real numbers.
This in turn implies that a straightforward argument involving just directional derivatives
could not provide a proof of local additivity in the general complex case. Hence,
to show local additivity we use strongly the complex structure.

In quantum information theory, quantum channels are the natural generalizations of stochastic communication channels in classical information theory. They are described in terms of completely-positive trace preserving linear maps (CPT maps).
A CPT map $\mathcal{N}: H_{d_{\rm in}}\to H_{d_{\rm out}}$ takes the set of $d_{\rm in}\times d_{\rm in}$ Hermitian matrices 
$H_{d_{\rm in}}$ to a subset of the set of all $d_{\rm out}\times d_{\rm out}$ Hermitian matrices
$H_{d_{\rm out}}$. Any finite dimensional quantum channel can be characterized in terms of a unitary embedding followed by a partial trace (the Stinespring dilation theorem): for any CPT map $\mathcal{N}$ there exists an ancillary space of Hermitian matrices $H_{E}$ such that
$$
\mathcal{N}(\rho)=\tr_{E}\left[U(\rho\otimes |0\rangle_{E}\langle 0|) U^{\dag}\right]
$$
where $\rho\in H_{d_{\rm in}}$ and $U$ is a unitary matrix mapping states $|\psi\rangle|0\rangle_E$ with
$|\psi\rangle\in H_{d_{\text{in}}}$ to 
$H_{d_{\text{out}}}\otimes H_{E}$.
 
The minimum entropy output of a quantum channel $\mathcal{N}$ is defined by
$$
S_{\min}(\mathcal{N})\equiv\min_{\rho\in H_{d_{\text{in}},+,1}}S\left(\mathcal{N}(\rho)\right)\;,
$$ 
where $H_{d_{\text{in}},+,1}\subset H_{d_{\text{in}}}$ is the set of all $d_{\rm in}\times d_{\rm in}$ positive semi-definite matrices with trace one (i.e. density matrices), and $S(\rho)=-\tr(\rho\log\rho)$ is the von-Neumann entropy. 
Since the von-Neumann entropy is concave it follows that the minimization can be taken over all rank one matrices
$\rho=|\psi\rangle\langle\psi|$ in $H_{d_{\text{in}},+,1}$. 

For any such rank one density matrix $\rho$ we can define a bipartite pure state
$|\Psi\rangle=U|\psi\rangle|0\rangle_{E}$ in the bipartite subspace $\mathcal{K}\equiv \{|\Psi\rangle\big|\;|\psi\rangle\in H_{d_{\rm in}}\}$. We therefore find that the minimum
entropy output of the channel $\mathcal{N}$ can be expressed in terms of the entanglement of the bipartite subspace 
$\mathcal{K}$ defined by 
$$
E(\mathcal{K})\equiv\min_{|\phi\rangle\in\mathcal{K}\;,\;\|\phi\|=1}E(|\phi\rangle)\;,
$$
where $E(|\phi\rangle)\equiv S\left(\tr_{E}(|\phi\rangle\langle\phi|)\right)$ is the entropy of entanglement.
In~\cite{GN} it was pointed out that $E(\mathcal{K})=0$ unless $\dim\mathcal{K}\leq (d_{\rm out}-1)(\dim H_{E}-1)$.
This claim follows directly from the fact that the number of (bipartite) states in an unextendible product basis is 
at least $d_{\rm out}+\dim H_{E}-1$~\cite{Ben99}.

With these notations, the non-additivity of the minimum entropy output of a quantum channel is equivalent to 
the existence of two subspaces $\mathcal{K}_1\subset\mathbb{C}^{n_1}\otimes\mathbb{C}^{m_1}$ and $\mathcal{K}_2\subset\mathbb{C}^{n_2}\otimes\mathbb{C}^{m_2}$ such that
$$
E(\mathcal{K}_1\otimes\mathcal{K}_1)<E(\mathcal{K}_1)+E(\mathcal{K}_2)\;.
$$
In what follows we will prove the local additivity of entanglement of subspaces, which is equivalent to the local additivity of the minimum entropy output.  

The rest of this paper is organized as follows. In section~\ref{local} we find and simplify the first and second directional derivatives of the von-Neumann entropy of entanglement. In section~\ref{additive} we prove our main result of local additivity which is stated in Theorem~\ref{main} for the non-singular case. In section~\ref{sing} we prove Theorem~\ref{main} for the singular case. We end with a discussion in section~\ref{conc}.

\section{Local Minimum}\label{local}

Let $\mathcal{K}\subset\mathbb{C}^{n}\otimes\mathbb{C}^{m}$ be a subspace of bipartite entangled states.
Since the bipartite Hilbert space $\mathbb{C}^{n}\otimes\mathbb{C}^{m}$ is isomorphic to the Hilbert space
of all $n\times m$ complex matrices $\mathbb{C}^{n\times m}$, we can view any bipartite state 
$|\psi\rangle^{AB}=\sum_{i,j}x_{ij}|i\rangle|j\rangle$ in $\mathcal{K}$ as an $n\times m$ matrix $x$.
The reduced density matrix of $|\psi\rangle^{AB}$ is then given by $\rho_r\equiv\tr_{B}|\psi\rangle^{AB}\langle\psi|=xx^*$,
and the entropy of entanglement of $|\psi\rangle^{AB}$ is given by
\begin{equation}\label{entropy}
E(x)\equiv-\tr\left(xx^*\log xx^*\right)\;.
\end{equation}
In our notations, instead of using a dagger, we use $x^*$ to denote the hermitian conjugate of the matrix $x$.

If $x\in\mathcal{K}$ is a local minimum of $E$ in $\mathcal{K}$, then there exists a neighbourhood of $x$ in $\mathcal{K}$
such that $x$ is the minimum in that neighbourhood. Any state in the neighbourhood of $x$ can be written as
$ax+by$, where $a,b\in\mathbb{C}$ and $y\in\mathcal{K}$ is an orthogonal matrix to $x$; i.e. $\tr(xy^*)=0$.
We also assume that the state is normalized so that $|a|^2+|b|^2=1$.
Now, since the function $E(x)$ is independent on global phase, we can assume that $a$ is a positive \emph{real} number.
We can also assume that $b$ is real since we can absorb its phase into $y$ (adding a phase to $y$ will not change its orthogonality to $x$). Thus, any normalized state in the neighbourhood of $x$ can be written as
$$
\frac{x+ty}{\sqrt{1+t^2}}\;\;\text{with}\;\;\tr(xy^*)=0\;,
$$
where $t\equiv b/a$ is a small real number and $y$ is normalized (i.e. $\tr(yy^*)=1$). 

\begin{definition}\label{maindef}
$\;$\\
\textbf{(a)} A matrix $x\in\mathcal{K}$ is said to be a critical point of $E(x)$ in $\mathcal{K}$ if
$$
D_{y}E(x)\equiv \frac{d}{dt}E\left(\frac{x+ty}{\sqrt{1+t^2}}\right)\Big|_{t=0}=0\;\;\;\forall\;y\in x^{\perp}
$$
where the notation $D_{y}E(x)$ indicate that we are taking the directional derivative of $E$ in the direction of $y$,
and $x^\perp\subset\mathcal{K}$ denotes the subspace of all the matrices $y$ in $\mathcal{K}$ for which
$\tr(xy^*)=0$.\\
\textbf{(b)}  A matrix $x\in\mathcal{K}$ is said to be a non-degenerate local minimum of $E(x)$ in $\mathcal{K}$ if it is critical
and
$$
D_{y}^{2}E(x)\equiv \frac{d^2}{dt^2}E\left(\frac{x+ty}{\sqrt{1+t^2}}\right)\Big|_{t=0}>0\;\;\;\forall\;y\in x^{\perp},
$$ 
were we also allow $D_{y}^{2}E(x)=+\infty$.
Moreover, a critical $x\in\mathcal{K}$ is said to be degenerate if there exists at least one direction $y$ such that
$D_{y}^{2}E(x)=0$.
\end{definition} 

In order to prove local additivity we will need to calculate the above directional derivatives.
This can be done by expressing the logarithm as an integral~\cite{Yard} (see also~\cite{OP04,Petz}). However, in this technique 
all the quantities are expressed by integrals, and some of these integral expressions do not lead to additivity
in a transparent way, as the divided difference method does.
We therefore apply below a new technique that is based on the \emph{divided difference}~\cite[(6.1.17)]{HJ99}.
One of the advantages of the divided difference approach, is that it enables one to calculate and express 
all directional derivatives \emph{explicitly} with no integrals involved. Before introducing the divided difference approach, we will first discuss briefly the affine parametrization.

In our calculations we will assume that $x$ is diagonal (or equivalently, the bipartite state $x$ represents is given in its 
Schmidt form). This assumption follows from
the singular value decomposition theorem; namely, we can always find unitary matrices $u\in\mathbb{C}^{n\times n}$ and $v\in\mathbb{C}^{m\times m}$ such that $uxv$ is an $n\times m$ diagonal matrix with non-negative real numbers (the singular values of $x$) on the diagonal. Since $E(x)=E(uxv)$ we can assume without loss of generality
that $x$ is a diagonal matrix.  

\subsection{The Affine Parametrization}

Up to second order in $t$ we have
\begin{align}
\rho(t) \equiv\frac{(x+ty)(x^{*}+ty^{*})}{1+t^2}
&=\left(xx^{*}+t(xy^{*}+yx^{*})+t^{2}yy^{*}\right)(1-t^{2})\nonumber\\
&=xx^{*}+t(xy^{*}+yx^{*})+t^2(yy^{*}-xx^{*})
=\rho+t\gamma_0+t^2\gamma_1\;,
\end{align}
where $\rho=xx^*$, $\gamma_0\equiv xy^*+yx^*$, and $\gamma_1\equiv yy^*-xx^*$.
Note that $\tr\rho=1$ and $\tr\gamma_0=\tr\gamma_1=0$, 
where without loss of generality we assumed $\tr (yy^*)=1$ since we can absorb the normalization factor of $y$ into $t$. We are interested in taking the first and second derivative of 
$$
E\left(\frac{x+ty}{\sqrt{1+t^2}}\right)=S(\rho(t))=S(\rho+t\gamma_0+t^2\gamma_1)\;.
$$

In this section we assume that $\rho=xx^*$ is an $n\times n$ non-singular matrix.
Denote 
$$
\sigma(t)\equiv\rho+t\gamma_0\;.
$$
In the next proposition we  relate $S(\rho(t))$ with $S(\sigma(t))$. 
\begin{proposition}\label{affparfor}
Let $\rho(t),\sigma(t),\rho,\gamma_0$ and $\gamma_1$ as above. Then
\begin{equation}\label{affparfor1}
S(\rho(t))=S(\sigma(t))-t^2\tr\left[\gamma_1 \log\rho\right]
+O(t^3)
\end{equation}
\end{proposition}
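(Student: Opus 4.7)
The plan is to treat $\rho(t)=\sigma(t)+t^2\gamma_1$ as a perturbation of $\sigma(t)$ by the small matrix $t^2\gamma_1$, and Taylor-expand $S$ around $\sigma(t)$ in that direction. Since $\rho$ is assumed nonsingular, $\sigma(t)$ remains positive definite for all sufficiently small $t$, so $S$ is real-analytic at $\sigma(t)$ and the expansion is legitimate. Concretely, I would write
\begin{equation*}
S(\rho(t)) \;=\; S\bigl(\sigma(t)+t^2\gamma_1\bigr) \;=\; S(\sigma(t)) + t^2\,DS\!\mid_{\sigma(t)}\!\!\bigl[\gamma_1\bigr] + \tfrac{1}{2}t^4\, D^2 S\!\mid_{\sigma(t)}\!\!\bigl[\gamma_1,\gamma_1\bigr]+\cdots,
\end{equation*}
and observe that the second- and higher-order terms in this expansion are $O(t^4)$, hence absorbable into $O(t^3)$.

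Next I would plug in the standard formula for the first differential of the von Neumann entropy. For any analytic $f$, $\frac{d}{ds}\tr f(A+sB)|_{s=0}=\tr\bigl[B f'(A)\bigr]$; applied to $f(x)=-x\log x$, which has $f'(x)=-\log x - 1$, this gives
\begin{equation*}
DS\!\mid_{A}\!\![B] \;=\; -\tr[B\log A]-\tr[B].
\end{equation*}
Since $\tr\gamma_1=\tr(yy^*)-\tr(xx^*)=0$, the second term vanishes, leaving $DS\!\mid_{\sigma(t)}\!\![\gamma_1]=-\tr[\gamma_1\log\sigma(t)]$.

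The only remaining point is to replace $\log\sigma(t)$ by $\log\rho$ up to the relevant order. Because $\rho$ is nonsingular and $\gamma_0$ is Hermitian, the operator $\sigma(t)=\rho+t\gamma_0$ is analytic in $t$ near $t=0$ with invertible value $\rho$, so $\log\sigma(t)=\log\rho+O(t)$ in operator norm. Consequently
\begin{equation*}
t^2\,DS\!\mid_{\sigma(t)}\!\![\gamma_1] \;=\; -t^2\tr[\gamma_1\log\rho]+O(t^3),
\end{equation*}
and combining this with the Taylor expansion above yields the claimed identity \eqref{affparfor1}.

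I do not anticipate a serious obstacle: the computation is essentially a first-order Taylor expansion of $S$ combined with the trace condition $\tr\gamma_1=0$. The only mild subtlety is making sure the remainders are uniform, which is handled by the nonsingularity hypothesis; this ensures that $\log$ is analytic at $\sigma(t)$ for $|t|$ small enough, so all the expansions above converge and the error estimates are valid.
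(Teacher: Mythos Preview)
Your argument is correct. Both proofs rest on the same two observations---that $\rho(t)=\sigma(t)+t^{2}\gamma_{1}$ is a perturbation of $\sigma(t)$ of size $t^{2}$, and that $\tr\gamma_{1}=0$ kills the linear-in-$B$ part of $DS$---but you organize them differently from the paper. The paper expands $\log\rho(t)$ explicitly via the power series $\log(I-X)=-\sum_{n\ge 1}X^{n}/n$, first comparing $\tr[\rho\log\rho(t)]$ with $\tr[\rho\log\sigma(t)]$ term by term and then reassembling $\tr[\rho(t)\log\rho(t)]$. You instead invoke the standard formula $DS|_{A}[B]=-\tr[B\log A]-\tr B$ directly, which collapses the computation to a one-line first-order Taylor expansion of $S$ at the moving point $\sigma(t)$, followed by $\log\sigma(t)=\log\rho+O(t)$. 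Your route is shorter and avoids the explicit series manipulation; the paper's route is more self-contained and does not presuppose the derivative formula for $S$. The only point one might ask you to spell out is the uniformity of the $O(t^{4})$ remainder given that the base point $\sigma(t)$ itself moves with $t$, but you flag this and it is indeed handled by the nonsingularity of $\rho$, which keeps $\sigma(t)$ in a compact set of positive definite matrices for small $|t|$.
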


\begin{proof}
Since $\rho$ is non-singular, also $\rho(t)$ and $\sigma(t)$ are non-singular for small enough $t$. Thus, $I-\rho(t)<I$ for small $t$. Using the Taylor expansion 
$$
\log\rho(t)=\log[I-(I-\rho(t))]=-\sum_{n=1}^{\infty}\frac{\left(I-\sigma(t)-t^2\gamma_1\right)^n}{n}\;, 
$$
we get
$$
-\tr\left[\rho\log\rho(t)\right]=\sum_{n=1}^{\infty}\frac{1}{n}\tr\left[\rho\left(I-\sigma(t)-t^2\gamma_1\right)^n\right]\;.
$$
Expanding the term in the trace above up to second order in $t$ gives
$$
\tr\left[\rho\left(I-\sigma(t)-t^2\gamma_1\right)^n\right]=\tr\left[\rho\left(I-\sigma(t)\right)^n\right]
+t^2n\tr\left[\rho(I-\rho)^{n-1}\gamma_1\right]+O(t^3)\;.
$$
We therefore have
$$
-\tr\left[\rho\log\rho(t)\right]=-\tr\left[\rho\log\sigma(t)\right]+t^2\sum_{n=1}^{\infty}\tr\left[\rho(I-\rho)^{n-1}\gamma_1\right]+O(t^3)\;.
$$
Since $\rho^{-1}=\sum_{n=1}^{\infty}(I-\rho)^{n-1}$ and $\tr(\gamma_1)=0$ we conclude
$$
\tr\left[\rho\log\rho(t)\right]=\tr\left[\rho\log\sigma(t)\right]+O(t^3)\;.
$$
Thus, $$\tr[\rho(t)\log\rho(t)]=\tr\left[\sigma(t)\log\sigma(t)\right]+t^2\tr\left[\gamma_1 \log\rho\right]+O(t^3).$$
This completes the proof.
\end{proof}

This simple relation between $S(\rho(t))$ and $S(\sigma(t))$ is very useful since now we can focus on the Taylor expansion of the simpler function $S(\sigma(t))$. 

 \subsection{The method of divided difference}
 
To calculate the first and second derivatives of $S(\sigma(t)$, we first evaluate the Taylor expansion of 
a complex valued function $f:\mathbb{C}\to\mathbb{C}$, which we later assume can be extended to act on 
$n\times n$ complex matrices. 
 
 We will make use of the notion of the \emph{divided difference} for $f$, which we refer the reader to~\cite[(6.1.17)]{HJ99}
 for more details. The divided difference for a function $f:\C\to\C$,
 given a sequence of distinct complex points, 
 $\alpha_i\in\C, i=1,\ldots,n$, is defined for $i=0,1$ by
 \begin{align}\label{defdivdif1}
 & \triangle^0 f(\alpha_1):=f(\alpha_1)\\
 & \triangle^1 f(\alpha_1,\alpha_2)\equiv\triangle f(\alpha_1,\alpha_2):=\frac{f(\alpha_1)-f(\alpha_2)}{\alpha_1-\alpha_2},
 \end{align}
 and defined inductively by
 \be
 \triangle^i f(\alpha_1,\ldots,\alpha_i,\alpha_{i+1})=
\frac{\triangle^{i-1} f(\alpha_1,\ldots,\alpha_{i-1},\alpha_{i}) - \triangle^{i-1} f(\alpha_1,\ldots,\alpha_{i-1},\alpha_{i+1})}
 {\alpha_i-\alpha_{i+1}},
 \ee
 for $i=2,3,\ldots,n$.  It is well known that $\triangle^i f(\alpha_1,\ldots,\alpha_i,\alpha_{i+1})$ is a symmetric function in $\alpha_1,\ldots,\alpha_{i+1}$,
 e.g. \cite[p'393]{HJ99}.  For points that are not distinct it is defined by an appropriate limit. For example, for $x\ne y$ we have
 \begin{align}\label{g1g}
 & \triangle f(x,x)=f'(x)\nonumber\\ 
 & \triangle^2 f(x,x,y)=\frac{f'(x)}{(x-y)} -\frac{f(x)-f(y)}{(x-y)^2}\\ 
 & \triangle^2 f(x,x,x)=\frac{1}{2} f''(x).\label{ququ}
 \end{align}
 Note that~\eqref{ququ} can be obtained from~\eqref{g1g} by setting $h\equiv y-x\to 0$ and expending
 $f(y)=f(x+h)=f(x)+hf'(x)+\frac{1}{2}h^2f''(x)+O(h^3)$.

 \begin{theorem}\label{taylor}  Let $A=\diag(\alpha_1,\ldots,\alpha_n)\in\C^{n\times n}$ be a diagonal square matrix, and
 $B=[b_{ij}]\in\C^{n\times n}$ be a complex square matrix. Assume that $f(x):\C\to\C$ satisfy one of the following conditions:
 \begin{enumerate}
 \item\label{smoothcase1}  $f(x)$ is an analytic function in some domain $\cD\subset \C$ which contains $\alpha_1,\ldots,\alpha_n$,
 and can be approximated uniformly in $\cD$ by polynomials.
 \item\label{smoothcase2}  $\alpha_1,\ldots,\alpha_n$ are in a real open interval $(a,b)$ and $f$ has two continuous derivatives in $(a,b)$.
 \end{enumerate}
 Then
 \begin{equation}\label{polcase1}
 f(A+tB)=f(A)+tL_A(B)+t^2Q_A(B)+O(t^3)
 \end{equation}
 Here $L_A:\C^{n\times n}\to\C^{n\times n}$ is a linear operator, and $Q_B:\C^{n\times n}\to \C^{n\times n}$ is a quadratic homogeneous noncommutative
 polynomial in $B$. For $i,j=1,\ldots,n$ we have
  \begin{align}\label{LABform}
& [L_A(B)]_{ij}=\triangle f(\alpha_i,\alpha_j)b_{ij}=\frac{f(\alpha_i)-f(\alpha_j)}{\alpha_i-\alpha_j} b_{ij}\\
 \label{QABform}
& [Q_A(B)]_{ij}=\sum_{k=1}^n \triangle^2f(\alpha_i,\alpha_k,\alpha_j) b_{ik}b_{kj}.
 \end{align}
 In particular
 \begin{align}
&  \tr(L_A(B))=\sum_{j=1}^{n}f'(\alpha_j)b_{jj}\\
 \label{tracQAB}
& \tr(Q_A(B))=\sum_{i,j=1}^n \frac{f'(\alpha_i)-f'(\alpha_j)}{2(\alpha_i-\alpha_j)}b_{ij}b_{ji}.
 \end{align}
\end{theorem}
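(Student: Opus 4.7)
The plan is to prove the expansion and the formulas (\ref{LABform})--(\ref{QABform}) first for monomials $f(x) = x^k$ by a direct combinatorial expansion, then extend to the two stated smoothness classes by uniform polynomial approximation, and finally deduce (\ref{tracQAB}) by specialization to the diagonal.

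First, for $f(x)=x^k$, expanding $(A+tB)^k$ and collecting terms in $t$ gives
\[
(A+tB)^k = A^k + t\sum_{r=0}^{k-1} A^r B A^{k-1-r} + t^2\sum_{0\le r<s\le k-1} A^r B A^{s-r-1} B A^{k-1-s} + O(t^3).
\]
Because $A=\diag(\alpha_1,\ldots,\alpha_n)$, the entries are $[A^p B A^q]_{ij} = \alpha_i^p \alpha_j^q b_{ij}$ and $[A^p B A^q B A^r]_{ij}=\sum_\ell \alpha_i^p\alpha_\ell^q\alpha_j^r b_{i\ell}b_{\ell j}$. The linear coefficient is therefore $b_{ij}\sum_{r=0}^{k-1}\alpha_i^r\alpha_j^{k-1-r}=b_{ij}(\alpha_i^k-\alpha_j^k)/(\alpha_i-\alpha_j)=\triangle f(\alpha_i,\alpha_j) b_{ij}$, and re-indexing the double sum in the quadratic term by the exponent triple $(p,q,r)$ with $p+q+r=k-2$ yields $[Q_A(B)]_{ij}=\sum_\ell b_{i\ell}b_{\ell j}\sum_{p+q+r=k-2}\alpha_i^p\alpha_\ell^q\alpha_j^r$, whose inner sum is the classical Hermite representation of $\triangle^2 f(\alpha_i,\alpha_\ell,\alpha_j)$ for $f(x)=x^k$. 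Linearity transfers (\ref{LABform})--(\ref{QABform}) to all polynomials.

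To pass to the two smoothness classes, I would choose approximating polynomials $p_k$ on a compact neighbourhood of $\{\alpha_1,\ldots,\alpha_n\}$: in case (\ref{smoothcase1}), uniform convergence $p_k\to f$ on $\cD$ together with Cauchy estimates yields uniform convergence of $p_k^{(j)}\to f^{(j)}$ for $j=1,2,3$ on any compact subset of $\cD$; in case (\ref{smoothcase2}), Jackson-type estimates on a slightly smaller subinterval give $p_k\to f$ in $C^2$ with a uniform bound on $p_k'''$. Both $\triangle f$ and $\triangle^2 f$ are continuous functionals of $f$ under such convergence (including in the confluent case), so applying the polynomial identity to each $p_k$ and passing to the limit establishes (\ref{polcase1}) for $f$; the $O(t^3)$ remainder is controlled uniformly in $k$ by the derivative bounds on the spectrum of $A+tB$ for small $t$.

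Finally, the trace formulas follow by setting $j=i$. For $L_A$, $\tr L_A(B)=\sum_i \triangle f(\alpha_i,\alpha_i) b_{ii}=\sum_i f'(\alpha_i) b_{ii}$ by the confluent formula in (\ref{g1g}). For $Q_A$, $\tr Q_A(B)=\sum_{i,k}\triangle^2 f(\alpha_i,\alpha_k,\alpha_i) b_{ik} b_{ki}$; using the symmetry of the divided difference together with the $i\leftrightarrow k$ symmetry of $b_{ik}b_{ki}$, I would replace each summand by the half-sum $\frac{1}{2}[\triangle^2 f(\alpha_i,\alpha_i,\alpha_k)+\triangle^2 f(\alpha_k,\alpha_k,\alpha_i)]$, at which point the $f$-dependent terms in (\ref{g1g}) cancel pairwise while the $f'$-dependent terms collapse to $(f'(\alpha_i)-f'(\alpha_k))/(\alpha_i-\alpha_k)$, yielding (\ref{tracQAB}) with the diagonal value $\frac{1}{2} f''(\alpha_i)$ emerging as the confluent limit (\ref{ququ}). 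The main obstacle is the approximation step, especially in case (\ref{smoothcase2}): one must ensure enough regularity of the approximants that the $O(t^3)$ remainder stays uniform in $k$, which is precisely why control of $p_k'''$ (not merely $p_k''$) on a neighbourhood of the spectrum of $A$ is required before one can exchange the limits $k\to\infty$ and $t\to 0$.
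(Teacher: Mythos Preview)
Your proposal is correct and follows essentially the same route as the paper: reduce by linearity to monomials $f(x)=x^m$, expand $(A+tB)^m$ explicitly, read off entries using diagonality of $A$, identify the resulting power sums with $\triangle f$ and $\triangle^2 f$, and derive \eqref{tracQAB} by the same $i\leftrightarrow j$ symmetrization combined with \eqref{g1g}. The only difference is that you flesh out the polynomial-approximation step (Cauchy estimates in case~\ref{smoothcase1}, Jackson-type $C^2$ approximation with a bound on $p_k'''$ in case~\ref{smoothcase2}) that the paper simply asserts in one line; your caveat about needing control of third derivatives to keep the $O(t^3)$ remainder uniform is well taken and makes your argument more complete than the paper's at that point.
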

\begin{remark}
The expansion above can be naturally generalized to higher than the second order, but for the purpose of this article, we will
only need to expand $f(A+tB)$ up to the second order in $t$. Moreover, for our purposes we will only need to assume that the $\alpha_i$ are real and the condition 2 on $f$ holds. We kept condition 1 on $f$ in the theorem just to be a bit more general. 
\end{remark}
Note that in all the expressions above, one must identify $\alpha_i=\alpha_j$ with the limit $\alpha_j\to\alpha_i$.
For example, the term
$$
\frac{f'(\alpha_i)-f'(\alpha_j)}{2(\alpha_i-\alpha_j)}=\frac{1}{2}f''(\alpha_i)\;\;\;\text{for}\;\;\alpha_i=\alpha_j\;.
$$
In particular, note that if $B$ is diagonal, Eq.~\eqref{tracQAB} gives the known second order term of the Taylor expansion.
 \proof  
 From the conditions on $f$, it is enough to prove the theorem assuming $f$ is a polynomial.
 By linearity, it is enough to prove all the claims for $f(x)=x^m$.
 Clearly, in the expension
 $$
 (A+tB)^{m}=A^{m}+tL_A(B)+t^2Q_A(B)+O(t^3)
 $$
 we must have 
 \begin{align}
 & L_A(B)=\sum_{0\le p,q,\; p+q=m-1} A^pBA^q,\label{121}\\ 
 & Q_A(B)=\sum_{0\le p,q,r,\; p+q+r=m-2} A^pBA^qBA^r,
 \label{LABQABfor}
 \end{align}
 where we expanded $(A+tB)^m$ up to first and second order in $t$. All that is left to show is that 
 these matrices coincide with the ones defined in~Eqs.~(\ref{LABform},\ref{QABform}).
 
 Indeed, since $A$ is diagonal, the matrix elements of the $L_{A}(B)$ in Eq.(\ref{121}) are given by
 $$
 [L_A(B)]_{ij}=\sum_{0\le p,q,\; p+q=m-1}\alpha_{i}^{p}\alpha_{j}^{q}b_{ij}=\frac{\alpha_{i}^{m}-\alpha_{j}^{m}}{\alpha_{i}-\alpha_{j}}b_{ij}\;,
 $$
 which is equal to the exact same matrix elements given in Eq.(\ref{LABform}).
 
In the same way, since $A$ is diagonal, observe that the matrix elements of the $Q_{A}(B)$ in Eq.(\ref{LABQABfor}) are given by
 $$
 [Q_A(B)]_{ij}=\sum_{k=1}^{n}\sum_{0\le p,q,r,\; p+q+r=m-2}\alpha_{i}^{p}\alpha_{k}^{q}\alpha_{j}^{r}b_{ik}b_{kj}\;.
 $$
 On the other hand, a straightforward calculation gives for $f(x)=x^m$
 $$
 \triangle^2 x^m(\alpha_i,\alpha_k,\alpha_j)=\sum_{0\le p,q,r,\; p+q+r=m-2} \alpha_{i}^{p} \alpha_{k}^{q} \alpha_{j}^{r}.
 $$
 Thus, the expressions in Eq.~(\ref{QABform}) and Eq.~(\ref{LABQABfor}) for $Q_{A}(B)$ are the same.

 We now prove Eq.~\eqref{tracQAB}.  Observe first that Eq.~\eqref{QABform} yields
 \begin{equation}\label{tracQAB1}
 \tr(Q_A(B))=\sum_{i,j=1}^n \triangle ^2 f(\alpha_i,\alpha_i, \alpha_j)b_{ij}b_{ji}\;,
 \end{equation}
 where we have used the symmetry $\triangle^2f(\alpha_i,\alpha_j, \alpha_i)=\triangle^2f(\alpha_i,\alpha_i, \alpha_j)$.
 Now, since $b_{ij}b_{ji}$ is symmetric under an exchange between $i$ and $j$, we can replace 
 $\triangle ^2 f(\alpha_i,\alpha_i, \alpha_j)$ in Eq.~(\ref{tracQAB1}) with 
 $$
 \frac{1}{2}\left[\triangle ^2 f(\alpha_i,\alpha_i, \alpha_j)+\triangle ^2 f(\alpha_j,\alpha_j, \alpha_i)\right]
 =\frac{1}{2}\triangle f'(\alpha_i,\alpha_j)\;,
 $$
 where for the last equality we used Eq.~\eqref{g1g}. This completes the proof.\qed

We now use the above theorem for the Taylor expansion of the function $S(\sigma(t))$ in the neighbourhood of $t=0$.

\subsection{The first and second derivatives of $E(x)$}

We first assume that $\rho$ is non singular. The case where $\rho$ is singular will be treated separately in section~\ref{sing}.
\begin{theorem}
Let $\rho=\diag\{p_1,\ldots,p_{n}\}$ with $p_j>0$
for $j=1,\ldots,n$. For this case, we get the following expressions:
\begin{align}
D_{y}^{1}E(x) & \equiv \frac{d}{dt}S(\rho(t))\Big|_{t=0}=-\tr(\gamma_0\log \rho)\nonumber\\
D_{y}^{2}E(x) & \equiv\frac{d^2}{dt^2}S(\rho(t))\Big|_{t=0}
=-2\left(\tr\left[\gamma_1 \log\rho\right]+\sum_{j,k} \frac{\log p_{j}-\log p_{k}}{2(p_j-p_k)}\left|(\gamma_0)_{jk}\right|^2\right)\;.
\label{se}
\end{align}
\end{theorem}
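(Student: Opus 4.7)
The plan is to bolt together the two ingredients already established: the affine reduction in Proposition~\ref{affparfor} and the divided-difference Taylor expansion in Theorem~\ref{taylor}, applied with $f(x)=x\log x$. Proposition~\ref{affparfor} allows me to trade the expansion of $S(\rho(t))$ for that of the affine quantity $S(\sigma(t))$, at the cost of an explicit $-t^{2}\tr[\gamma_{1}\log\rho]$ correction. Since $S(\sigma(t))=-\tr f(\sigma(t))$ with $f(x)=x\log x$, and $\sigma(t)=A+tB$ for $A=\diag(p_{1},\ldots,p_{n})$ and $B=\gamma_{0}$, the task reduces to reading coefficients off Theorem~\ref{taylor}.

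For the linear coefficient I would invoke $\tr L_{A}(\gamma_{0})=\sum_{j}f'(p_{j})(\gamma_{0})_{jj}$. With $f'(x)=1+\log x$, the constant $1$ contribution drops out thanks to $\tr\gamma_{0}=0$, leaving exactly $\tr(\gamma_{0}\log\rho)$. Negating yields the claimed $D^{1}_{y}E(x)=-\tr(\gamma_{0}\log\rho)$.

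For the quadratic coefficient I would use formula~\eqref{tracQAB}, which after substituting $f'(x)=1+\log x$ and using hermiticity of $\gamma_{0}$ (so that $(\gamma_{0})_{ij}(\gamma_{0})_{ji}=|(\gamma_{0})_{ij}|^{2}$) gives
$$
\tr Q_{A}(\gamma_{0})=\sum_{j,k}\frac{\log p_{j}-\log p_{k}}{2(p_{j}-p_{k})}\bigl|(\gamma_{0})_{jk}\bigr|^{2}.
$$
Adding the Proposition~\ref{affparfor} correction supplies the full coefficient of $t^{2}$ in $S(\rho(t))$. Matching against $S(\rho(t))=S(\rho)+tD^{1}_{y}E(x)+\tfrac{1}{2}t^{2}D^{2}_{y}E(x)+O(t^{3})$ and multiplying by $2$ (with a sign flip from $S=-\tr f$) produces the stated formula for $D^{2}_{y}E(x)$.

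There is no serious obstacle: the computation is mechanical once both prior results are in place. The only mildly delicate point is how to interpret the summands when some eigenvalues coincide, $p_{j}=p_{k}$, but this is already covered by the limit conventions in~\eqref{g1g}--\eqref{ququ}. Non-singularity of $\rho$ guarantees $p_{j}>0$ for all $j$, so $\log p_{j}$ and every divided difference $\triangle\log(p_{j},p_{k})$ remain finite; the singular case where some $p_{j}=0$ will be treated separately in Section~\ref{sing}.
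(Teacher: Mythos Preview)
Your proposal is correct and follows essentially the same route as the paper: apply Theorem~\ref{taylor} to $S(\sigma(t))$ with the function $x\mapsto x\log x$ (the paper uses $g(t)=-t\log t$, a harmless sign convention), read off the linear and quadratic trace terms using $\tr\gamma_{0}=0$ and hermiticity of $\gamma_{0}$, and then combine with Proposition~\ref{affparfor} to recover the full expansion of $S(\rho(t))$.
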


\begin{remark}
The condition for $x\in\mathcal{K}$ to be critical is $D_{y}^{1}E(x)=0$ which is equivalent to $\tr[(xy^*+yx^*)\log xx^*]=0$ for all $y\in\mathcal{K}$
such that $\tr(xy^*)=0$. Moreover, if $x$ is critical then we also have 
$D_{iy}^{1}E(x)=0$ for all $y\in x^\perp\subset\mathcal{K}$.
Hence, if $x$ is critical we must have 
\begin{equation}\label{critical}
\tr(xy^*\log xx^*)=0
\end{equation} 
for all $y\in x^\perp\subset\mathcal{K}$.   
\end{remark}

\begin{proof}
Theorem~\ref{taylor} implies that
 $$
S(\rho+t\gamma_0)=S(\rho)+tL_{\rho}(\gamma_0)+t^2Q_{\rho}(\gamma_0)+\mathcal{O}(t^3).
 $$
 where $L_\rho$ and $Q_\rho$ are the following linear and quadratic forms
\begin{align*}
 & L_{\rho}(\gamma)\equiv \sum_{i=1}^{n} g'(p_i)(\gamma_0)_{ii}\\
& Q_{\rho}(\gamma)\equiv \sum_{i=1}^{n}\sum_{j=1}^{n}
 \frac{g'(p_i)-g'(p_j)}{2(p_i-p_j)}(\gamma_0)_{ij}(\gamma_0)_{ji}\;,
 \end{align*}
and $g(t)\equiv-t\log t$. Note that the expressions for $L_{\rho}(\gamma_0)$ and $Q_{\rho}(\gamma_0)$ 
above are the traces of the analogous expressions given in theorem~\ref{taylor}, since
$S(\rho)$ is defined as the trace of the matrix $g(\rho)=-\rho\log \rho$.
 
Since $\gamma_0$ is hermitian with zero trace, and $g'(t)=-1-\log t$, we get
 \begin{eqnarray}
 L_{\rho}(\gamma_0) &=& -\tr(\gamma_0\log \rho)\nonumber\\
 Q_{\rho}(\gamma_0) &=& -\sum_{j,k} \frac{\log p_{j}-\log p_{k}}{2(p_j-p_k)}\left|(\gamma_0)_{j,k}\right|^2\;\label{qxy}\;.
\end{eqnarray}
Combining this with proposition~\ref{affparfor} proves the theorem.
\end{proof}

In the following lemma, we rewrite the expression in Eq.~(\ref{se}), which will be useful for the proof of local additivity.

\begin{lemma}\label{lemsecder}
Denote $w=(y+y^*)/2$, and $z=i(y-y^*)/2$. Denote also $r_{jk}=\sqrt{p_j/p_k}$, where $\{p_j\}_{i=1}^{n}$ are the eigenvalues of $\rho=xx^*$.
Then,
the expression in Eq.(\ref{se}) for $D_{y}^{2}E(x)$ can be rewritten as
\begin{align}
D_{y}^{2}E(x)  =-2E(x)
-\tr\left[(yy^*+y^*y) \log\rho\right]
 -2\sum_{j,k}\left(|w_{jk}|^2\Phi(r_{jk})+|z_{jk}|^2\Phi(-r_{jk})\right)\;,\label{gf}
\end{align}
where
\begin{equation}\label{Phi}
\Phi(r)\equiv\frac{1}{2}\frac{r+1}{r-1}\log r^{2}\;\;\;,\; r\in\mathbb{R},
\end{equation}
with the identification $\Phi(1)=2$.
\end{lemma}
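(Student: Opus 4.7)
The plan is to reorganize the right-hand side of Eq.~(\ref{se}) using the Hermitian decomposition $y=w-iz$, with $w=(y+y^*)/2$ and $z=i(y-y^*)/2$, and then reconcile the resulting cross-terms.

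First, since $\gamma_1=yy^*-\rho$, one has $\tr[\gamma_1\log\rho]=\tr[yy^*\log\rho]+E(x)$, which immediately produces the $-2E(x)$ summand in (\ref{gf}). The surviving $-2\tr[yy^*\log\rho]$ is then symmetrized into $-\tr[(yy^*+y^*y)\log\rho]$, leaving a ``symmetrization defect'' $\tr[y^*y\log\rho]-\tr[yy^*\log\rho]$; the main task is to show that this defect is exactly absorbed by a cross-term hidden in the $|(\gamma_0)_{jk}|^2$ sum.

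Next, because $x=\diag(\sqrt{p_1},\ldots,\sqrt{p_n})$, a direct computation gives $(\gamma_0)_{jk}=\sqrt{p_j}\,\overline{y_{kj}}+\sqrt{p_k}\,y_{jk}$. Substituting $y_{jk}=w_{jk}-iz_{jk}$ and $\overline{y_{kj}}=w_{jk}+iz_{jk}$ (using Hermiticity of $w,z$) collapses this to
\begin{equation*}
(\gamma_0)_{jk}=(\sqrt{p_j}+\sqrt{p_k})\,w_{jk}+i(\sqrt{p_j}-\sqrt{p_k})\,z_{jk},
\end{equation*}
so that
\begin{equation*}
|(\gamma_0)_{jk}|^2=(\sqrt{p_j}+\sqrt{p_k})^2|w_{jk}|^2+(\sqrt{p_j}-\sqrt{p_k})^2|z_{jk}|^2+2(p_j-p_k)\,\mathrm{Im}(w_{jk}\overline{z_{jk}}).
\end{equation*}
The elementary identity $\frac{\log p_j-\log p_k}{2(p_j-p_k)}(\sqrt{p_j}\pm\sqrt{p_k})^2=\Phi(\pm r_{jk})$---a one-line consequence of the factorization $p_j-p_k=(\sqrt{p_j}-\sqrt{p_k})(\sqrt{p_j}+\sqrt{p_k})$ together with $\log(p_j/p_k)=\log r_{jk}^2$---then turns the $|w|^2$ and $|z|^2$ contributions into precisely the claimed $\Phi$-sum.

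The remaining $\mathrm{Im}$ cross-term in $|(\gamma_0)_{jk}|^2$ contributes $-2\sum_{j,k}(\log p_j-\log p_k)\,\mathrm{Im}(w_{jk}\overline{z_{jk}})$ to $D_y^2 E(x)$, and I expect the sign-bookkeeping of its cancellation against the symmetrization defect to be the delicate part of the argument. The key calculation is: write $\tr[yy^*\log\rho]-\tr[y^*y\log\rho]=\sum_{j,k}|y_{jk}|^2(\log p_j-\log p_k)$ and expand $|y_{jk}|^2=|w_{jk}|^2+|z_{jk}|^2-2\,\mathrm{Im}(w_{jk}\overline{z_{jk}})$. Hermiticity of $w,z$ makes $|w_{jk}|^2$ and $|z_{jk}|^2$ symmetric in $j\leftrightarrow k$, while $\log p_j-\log p_k$ is antisymmetric, so the two diagonal pieces drop out and the defect collapses to $-2\sum(\log p_j-\log p_k)\,\mathrm{Im}(w_{jk}\overline{z_{jk}})$, which matches the residual cross-term with exactly the right sign to cancel. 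Everything else is substitution, and (\ref{gf}) follows.
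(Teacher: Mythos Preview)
Your argument is correct and follows essentially the same route as the paper's proof: both decompose $(\gamma_0)_{jk}$ via $y=w-iz$, identify the $\Phi(\pm r_{jk})$ factors through the factorization $p_j-p_k=(\sqrt{p_j}-\sqrt{p_k})(\sqrt{p_j}+\sqrt{p_k})$, and then show that the $w$--$z$ cross term exactly produces the correction $\tr[(y^*y-yy^*)\log\rho]$ needed to symmetrize $-2\tr[yy^*\log\rho]$. The only cosmetic difference is that the paper rewrites the cross term as $\tfrac12(p_j-p_k)(|y_{kj}|^2-|y_{jk}|^2)$ rather than $2(p_j-p_k)\,\mathrm{Im}(w_{jk}\overline{z_{jk}})$; these are equal, and your sign bookkeeping (the ``defect'' is $+2\sum\ldots$, cancelling the cross-term contribution $-2\sum\ldots$) comes out right despite the slight ambiguity in your phrasing.
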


\begin{proof}
The expression in Eq.(\ref{se}) for $D_{y}^{2}E(x)$ involves the terms $|(\gamma_0)_{jk}|^2$. The matrix 
$\gamma_0=xy^{*}+yx^{*}=xy^{*}+yx$ where $x=\diag\{\sqrt{p_1},\ldots,\sqrt{p_n}\}$. Note that $y^*=w+iz$ and $y=w-iz$, where $w$ and $z$ are the Hermitian matrices defined in the lemma. Thus,
$$
\gamma_0=xw+wx+i(xz-zx)\;.
$$
In terms of the matrix elements $w_{jk}$ and $z_{jk}$ of $w$ and $z$, we have
$$
(\gamma_0)_{jk}=(\sqrt{p_k}+\sqrt{p_j})w_{jk}+i(\sqrt{p_j}-\sqrt{p_k})z_{jk}\;.
$$
The square of this expression can be written as
\begin{align*}
|(\gamma_0)_{jk}|^2  =(\sqrt{p_j}+\sqrt{p_k})^2\left |w_{jk}\right|^2+(\sqrt{p_j}-\sqrt{p_k})^2\left|z_{jk}\right|^2
+i(p_j-p_k)(w_{jk}^{*}z_{jk}-w_{jk}z_{jk}^{*})
\end{align*}
Moreover, expressing back $w$ and $z$ interms of $y$ gives 
$i(w_{jk}^{*}z_{jk}-w_{jk}z_{jk}^{*})=(|y_{kj}|^2-|y_{jk}|^2)/2$.
We can therefore write
\begin{align*}
|(\gamma_0)_{jk}|^2 = (\sqrt{p_j}+\sqrt{p_k})^2\left |w_{jk}\right|^2+(\sqrt{p_j}-\sqrt{p_k})^2\left|z_{jk}\right|^2
+\frac{1}{2}(p_j-p_k)(|y_{kj}|^2-|y_{jk}|^2)\;.
\end{align*}
Substituting this expression, and the value for $\gamma_1=yy^*-xx^*$, into Eq.(\ref{se}) gives
$$
 -\frac{1}{2}D_{y}^{2}E(x) =E(x)+
 \tr\left[yy^* \log\rho\right]
+\sum_{j,k}\log \left(\frac{p_j}{p_k}\right)
\left\{\frac{(\sqrt{p_j}+\sqrt{p_k})^2}{2(p_j-p_k)}\left|w_{jk}\right|^2
+\frac{(\sqrt{p_j}-\sqrt{p_k})^2}{2(p_j-p_k)}\left|z_{jk}\right|^2+\frac{1}{4}(|y_{kj}|^2-|y_{jk}|^2)
\right\}\;.
$$
Note first that the term
$$
\frac{1}{4}\sum_{j,k}\log \left(\frac{p_j}{p_k}\right)
(|y_{kj}|^2-|y_{jk}|^2)
= \frac{1}{2}\tr\left[(y^*y-yy^*)\log\rho\right]\;.
$$
Moreover, denoting $r_{jk}=\sqrt{p_j/p_k}$ we get
$$
\frac{(\sqrt{p_j}+\sqrt{p_k})^2}{2(p_j-p_k)}\log \left(\frac{p_j}{p_k}\right)
=\frac{(r_{jk}+1)^{2}}{2(r_{jk}^2-1)}\log r_{jk}^{2}
=\frac{1}{2}\frac{r_{jk}+1}{r_{jk}-1}\log r_{jk}^{2}\equiv \Phi(r_{jk})
$$
Similarly,
$$
\frac{(\sqrt{p_j}+\sqrt{p_k})^2}{2(p_j-p_k)}\log \left(\frac{p_j}{p_k}\right)
=\frac{1}{2}\frac{r_{jk}-1}{r_{jk}+1}\log r_{jk}^{2}
= \Phi(-r_{jk})
$$
With these notations we get
$$
 -\frac{1}{2}D_{y}^{2}E(x)  =E(x)+
\frac{1}{2}\tr\left[(yy^*+y^*y) \log\rho\right]
+\sum_{j,k}\left(|w_{jk}|^2\Phi(r_{jk})+|z_{jk}|^2\Phi(-r_{jk})\right)
$$
This complete the proof.
\end{proof}

In the rest of the paper we will use the notations
\begin{align}
M_{x}(y)&\equiv\sum_{j,k=1}^{n}\left(|w_{jk}|^2\Phi(r_{jk})+|z_{jk}|^2\Phi(-r_{jk})\right)=\tr\left[w\Phi_{\rho}^{+}(w)+z\Phi_{\rho}^{-}(z)\right]
\nonumber\\
\Gamma_{x}(y)& \equiv -E(x)-\frac{1}{2}\tr\left[(y^*y+yy^*)\log xx^*\right]\;.\label{gam}
\end{align}
where $\Phi_{\rho}^{\pm}$ are self-adjoint linear operators defining in terms of the Hadamard product between the input matrix
and the matrix with elements $\Phi(\pm r_{jk})$. That is,
$$
\left[\Phi_{\rho}^{\pm}(w)\right]_{jk}=\Phi(\pm r_{jk})w_{jk}\;.
$$
With these notations we get that 
$D_{y}^{2}E(x)> 0$ if and only if
\begin{equation}\label{delta1}
M_{x}(y)< \Gamma_{x}(y)\;.
\end{equation}

\subsection{The complex structure and additional necessary condition}

If $D_{y}^{2}E(x)> 0$ for all $y$ orthogonal to $x$, then $D_{iy}^{2}E(x)$ is also positive
since $iy$ is orthogonal to $x$. That is,
\begin{equation}\label{delta2}
M_{x}(iy)< \Gamma_{x}(iy)=\Gamma_{x}(y)\;.
\end{equation}
Therefore, we get from Eqs.~(\ref{delta1},\ref{delta2}) that if $x$ is a non-degenerate local minimum then
\begin{align}\label{221}
\frac{1}{2}\left(M_{x}(y)+M_{x}(iy)\right)
= \sum_{j,k}|y_{jk}|^2\tilde{\Phi}(r_{jk})< \Gamma_{x}(y)\;,
\end{align}
where
\begin{equation}\label{Phi0}
\tilde{\Phi}(r):=\frac{1}{2}\left(\Phi(r)+\Phi(-r)\right)=\frac{1}{2}\frac{r^2+1}{r^2-1}\log r^2\;,
\end{equation}
with the identification $\tilde{\Phi}(\pm1)=1$. Let $\tilde{\Phi}_{\rho}$ be a self-adjoint linear operator defining in terms of the Hadamard product between the input matrix and the matrix with components $\tilde{\Phi}(r_{jk})$. With this notation the necessary condition given in Eq.(\ref{221}) can be written as
\be
\tr\left[y^*\tilde{\Phi}_{\rho}(y)\right]<\Gamma_{x}(y)\;.\label{222}
\ee

A simple analysis of the function $\tilde{\Phi}$ shows that $\tilde{\Phi}(r)\geq 1$ with equality if and only if $r=\pm 1$. 
Thus, Eq.(\ref{222}) also implies the following necessary condition on a local minimum:
$$
1 \leq \Gamma_{x}(y)\;,
$$
which can be written as
\begin{equation}\label{elegant}
E(y)-E(x)\geq 1-\frac{1}{2}\left[S(yy^*\|xx^*)+S(y^*y\|xx^*)\right]
\end{equation}
where 
$$
S(yy^*\|xx^*)\equiv \tr(yy^*\log yy^*)-\tr(yy^*\log xx^*)
$$
is the relative entropy. Since $S(yy^*\|xx^*)\geq 0$ with equality if and only if $yy^*=xx^*$, we always have
$S(yy^*\|xx^*)>0$ for $\tr(xy^*)=0$. Nevertheless, it is possible that $\tr(xy^*)=0$ and yet $S(yy^*\|xx^*)\leq 1$.
In such cases Eq.~\eqref{elegant} gives $E(y)\geq E(x)$ which is consistent with the fact that $x$ is a local min.
 
\section{Local Additivity}\label{additive}

We now state the main result of this paper.

\begin{theorem}\label{main}
Let $x^{(1)}$ and $x^{(2)}$ be two non-degenerate local minima of $E(x)$ in $\mathcal{K}^{(1)}\subset\C^{n_1\times m_1}$
and $\mathcal{K}^{(2)}\subset\C^{n_2\times m_2}$, respectively. Then, $x^{(1)}\otimes x^{(2)}$ is a non-degenerate local minimum
of $E(x)$ in $\mathcal{K}^{(1)}\otimes\mathcal{K}^{(2)}$. Moreover, if $x^{(1)}$ is degenerate local minimum and 
$x^{(2)}$ is non-degenerate local minimum, then $x^{(1)}\otimes x^{(2)}$ is a degenerate local minimum.
\end{theorem}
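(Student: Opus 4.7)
The plan is to verify that $X := x^{(1)}\otimes x^{(2)}$ satisfies the critical-point and non-degenerate second-derivative conditions of Definition~\ref{maindef}, first under the assumption that $\rho = XX^* = \rho_1\otimes\rho_2$ is non-singular so that Lemma~\ref{lemsecder} applies directly (the singular case being deferred to Section~\ref{sing}). Writing $\mathcal{K}^{(i)} = \mathbb{C}x^{(i)}\oplus V_i$ with $V_i := (x^{(i)})^\perp\cap\mathcal{K}^{(i)}$, every $Y\in X^\perp\cap(\mathcal{K}^{(1)}\otimes\mathcal{K}^{(2)})$ admits a unique decomposition
\begin{equation*}
Y = A\otimes x^{(2)} \,+\, x^{(1)}\otimes B \,+\, C, \qquad A\in V_1,\ B\in V_2,\ C\in V_1\otimes V_2.
\end{equation*}
Criticality of $X$ follows by inserting this into~\eqref{critical} and using $\log(\rho_1\otimes\rho_2) = \log\rho_1\otimes I + I\otimes\log\rho_2$: the $A$- and $B$-pieces reduce to the criticality of $x^{(1)}$ and $x^{(2)}$, while each summand of the $C$-piece carries a vanishing factor $\tr(x^{(i)}y^{(i)*})$.

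For the second derivative I would apply Lemma~\ref{lemsecder} and expand in the three components of $Y$. The additivity of entropy on tensor products, $E((x^{(1)}+tA)\otimes x^{(2)}) = E(x^{(1)}+tA) + E(x^{(2)})$, immediately yields $D^2_{A\otimes x^{(2)}}E(X) = D^2_A E(x^{(1)})$ and symmetrically $D^2_{x^{(1)}\otimes B}E(X) = D^2_B E(x^{(2)})$, both strictly positive by hypothesis. The mixed quadratic form between $A\otimes x^{(2)}$ and $x^{(1)}\otimes B$ should be checked to vanish: in the traces involving $\log\rho$ the tensor split produces factors $\tr(Ax^{(1)*})=0$ or $\tr(Bx^{(2)*})=0$, and a parallel block argument kills the corresponding piece of $M_X$. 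What remains is the pure $C$-quadratic term together with its mixed couplings to $A\otimes x^{(2)}$ and $x^{(1)}\otimes B$.

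The hard part is the $C$-contribution to $M_X$. The ratio matrix for $\rho_1\otimes\rho_2$ has entries $r_{(jk),(j'k')} = r^{(1)}_{jj'}\,r^{(2)}_{kk'}$, and because $\Phi$ does not factor over products of its argument, $M_X$ cannot be split tensor-wise. Here I would invoke the complex structure as in~\eqref{221}: since $D^2_Y E(X) > 0$ and $D^2_{iY} E(X) > 0$ are both required, averaging replaces $\Phi^\pm$ by the symmetrized $\tilde\Phi$, satisfying the key bound $\tilde\Phi(r)\ge 1$ with equality iff $r=\pm 1$. Combining this with the factor-level ``elegant'' inequalities~\eqref{elegant}, the hypotheses $M_{x^{(i)}}<\Gamma_{x^{(i)}}$, and non-negativity of the relative entropies appearing inside $\Gamma_X$, I expect to obtain $\tr[Y^*\tilde\Phi_\rho Y] < \Gamma_X(Y)$ for every nonzero $Y\in X^\perp$, from which $D^2_Y E(X)>0$ is extracted (possibly by re-running the inequality for $e^{i\theta}Y$ to recover the non-averaged bound). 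The mixed degenerate/non-degenerate case runs identically with $<$ weakened to $\le$, equality being realized at $Y = A_0\otimes x^{(2)}$ for any degenerate direction $A_0\in V_1$ of $x^{(1)}$.

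The principal obstacle is precisely this $V_1\otimes V_2$ component: it has no direct counterpart at either factor, so its contribution must be bounded by playing the factor-level slack $\Gamma_{x^{(i)}}-M_{x^{(i)}}$ against the $\tilde\Phi\ge 1$ inequality. The authors' remark that local additivity fails over $\mathbb{R}$, precisely because the averaging $Y \mapsto iY$ is unavailable there, makes clear that the complex structure must be used essentially in handling the $C$-term; no argument built purely from real directional derivatives could succeed.
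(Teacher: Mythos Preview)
Your decomposition of $Y$, the criticality argument, the treatment of the pure $A\otimes x^{(2)}$ and $x^{(1)}\otimes B$ directions, and the vanishing of the cross terms are all essentially what the paper does. The genuine gap is in your handling of the $C\in V_1\otimes V_2$ piece.

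The paper's key technical ingredient, which you have not identified, is the scalar subadditivity inequality (Lemma~\ref{phi0})
\[
\Phi(rs)\ \le\ \tilde\Phi(r)+\tilde\Phi(s),
\]
equivalently $\Phi^{\pm}_{\rho_1\otimes\rho_2}\le \tilde\Phi_{\rho_1}\otimes I + I\otimes\tilde\Phi_{\rho_2}$ as operators. This is precisely what lets one split the bipartite quantity $M_X(C)$, whose entries involve $\Phi\bigl(r^{(1)}_{jj'}r^{(2)}_{kk'}\bigr)$, into a sum of factor-level $\tilde\Phi$-traces. After this split, the Schmidt decomposition of $C$ together with the factor-level necessary condition~\eqref{222} (obtained by averaging $y^{(i)}\mapsto iy^{(i)}$ \emph{at the factor level}) yields $M_X(C)<\Gamma_X(C)$ directly, with no need to ``recover a non-averaged bound''. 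Your proposed tools do not substitute for this: the inequality $\tilde\Phi(r)\ge 1$ points the wrong way (it lower-bounds the quantity you must upper-bound), and the ``elegant'' inequality~\eqref{elegant} is a downstream consequence of~\eqref{222}, not a device for decomposing $\tilde\Phi_{\rho_1\otimes\rho_2}$ into factor pieces.

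There is also a structural problem with averaging at the bipartite level. If you only establish $\tr[Y^*\tilde\Phi_{\rho}Y]<\Gamma_X(Y)$ for all $Y\in X^\perp$, you have shown $\tfrac12\bigl(D^2_YE(X)+D^2_{iY}E(X)\bigr)>0$, which does not by itself force each summand to be positive; ``re-running for $e^{i\theta}Y$'' gives the same averaged statement and is circular. The paper avoids this by bounding $M_X(C)$ itself (not its average) via Lemma~\ref{phi0}, so that the complex averaging is used only to produce the factor-level input~\eqref{222}.
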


The theorem above implies, in particular, that if $x^{(1)}$ and $x^{(2)}$ are critical points of $E(x)$ in $\mathcal{K}^{(1)}$
and $\mathcal{K}^{(2)}$, respectively, then, $x^{(1)}\otimes x^{(2)}$ is a critical point of $E(x)$
in $\mathcal{K}^{(1)}\otimes\mathcal{K}^{(2)}$. This fact was observed in~\cite{AIM08} (see also~\cite{Maxim}), 
and later was stated in~\cite{FGA}.
It follows from the linearity in $y$ of the condition given in Eq.~(\ref{critical}) for critical points. 
More precisely, if $x^{(1)}$ and $x^{(2)}$ are critical points, then $x^{(1)}\otimes x^{(2)}$ is also critical if
(see Eq.~(\ref{critical}))
\begin{align*}
&0 =\tr\left[\left(x^{(1)}\otimes x^{(2)}\right)y^*\log \left(x^{(1)}x^{(1)*}\otimes x^{(2)}x^{(2)*}\right)\right]=\\
&\tr\left[x^{(1)}y^{(1)*}\log (x^{(1)}x^{(1)*})\right]+\tr\left[x^{(2)}y^{(2)*}\log (x^{(2)}x^{(2)*})\right]
\end{align*}
for all $y\in (x^{(1)}\otimes x^{(2)})^{\perp}$, where $y^{(1)*}\equiv\tr_2[(I\otimes x^{(2)})y^*]$ and 
$y^{(2)*}\equiv\tr_1[(x^{(1)}\otimes I)y^*]$. In the equation above we used the additivity of the logarithm function
under tensor products. Moreover, since $y\in (x^{(1)}\otimes x^{(2)})^{\perp}$, we also have
$y^{(1)}\in (x^{(1)})^{\perp}$ and $y^{(2)}\in (x^{(1)})^{\perp}$. Thus, if $x^{(1)}$ and $x^{(2)}$ are critical points, 
$x^{(1)}\otimes x^{(2)}$ is also critical~\footnote{In~\cite{AIM08,FGA} it was shown to be true for a large class of functions (not only for the von-Neumann entropy) including all the $p$-norm entropy functions.}.

In the following subsection we provide one of the main ingredients for
the local additivity of the von-Neumann entropy output of a quantum channel.

\subsection{The Subadditivity of $\Phi_{\rho}^{\pm}$}

\begin{lemma}\label{phi0}
Let $\Phi,\tilde{\Phi}:\mathbb{R}\to\mathbb{R}$ be defined as in Eq.(\ref{Phi}) and Eq.~\eqref{Phi0}, respectively. 
Then, for any $r,s\in\mathbb{R}$ the following holds:
\begin{equation}\label{rs}
\Phi(rs)\leq\tilde{\Phi}(r)+\tilde{\Phi}(s)
\end{equation}
with equality if and only if $r=s$. In the operator language of Eqs.~(\ref{gam},\ref{222}), the inequality~\eqref{rs} can be expressed as
\be
\Phi_{\rho^{A}\otimes\rho^{B}}^{\pm}\leq \tilde{\Phi}_{\rho^A\otimes I^B}+\tilde{\Phi}_{I^A\otimes\rho^B}
\;=\;\tilde{\Phi}_{\rho^A}\otimes I^B+I^A\otimes\tilde{\Phi}_{\rho^B}\;.
\ee
where two operators satisfies $O_1\leq O_2$ if and only if $\tr[y^*O_1y]\leq\tr[y^*O_2y]$ for all $y$.
\end{lemma}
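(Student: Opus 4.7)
The plan is to reduce the operator inequality to a pointwise scalar inequality on matrix entries, and then to recognize that scalar inequality as midpoint convexity of an elementary one-dimensional function. Since $\Phi_\rho^{\pm}$ and $\tilde\Phi_\rho$ are Hadamard multipliers whose entries depend only on the ratios $r_{jk}=\sqrt{p_j/p_k}$, and since the spectrum of $\rho^A\otimes\rho^B$ factors as $p_i^Ap_j^B$ (so that $r^{AB}_{(ij)(kl)}=r^A_{ik}r^B_{jl}$) while $\rho^A\otimes I^B$ and $I^A\otimes\rho^B$ produce ratios $r^A_{ik}$ and $r^B_{jl}$ respectively, the operator inequality is equivalent to
$$\Phi(\pm r^A_{ik}r^B_{jl})\le\tilde\Phi(r^A_{ik})+\tilde\Phi(r^B_{jl})$$
holding for every index quadruple. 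Using that $\tilde\Phi$ depends only on $r^2$, so $\tilde\Phi(-r)=\tilde\Phi(r)$, this is precisely \eqref{rs} with $r$ replaced by $\pm r^A_{ik}$ and $s=r^B_{jl}$. So everything reduces to the scalar inequality.

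For $r,s>0$, I would parametrize $r=e^a$, $s=e^b$. A short computation gives
$$\tilde\Phi(e^a)=a\coth a,\qquad \Phi(e^{a+b})=(a+b)\coth\tfrac{a+b}{2},$$
so \eqref{rs} becomes $2h((a+b)/2)\le h(a)+h(b)$ where $h(x):=x\coth x$. This is midpoint convexity, so it suffices to show that $h$ is strictly convex on $\R$. A direct differentiation gives $h''(x)=\tfrac{2}{\sinh^2 x}\bigl[x\coth x-1\bigr]$, and since $x\coth x$ is even it is enough to verify $x\coth x\ge 1$ for $x\ge 0$; this follows because $\psi(x):=x\cosh x-\sinh x$ satisfies $\psi(0)=0$ and $\psi'(x)=x\sinh x\ge 0$ (strictly for $x>0$). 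At $x=0$ the Taylor expansion $x\coth x=1+x^2/3+O(x^4)$ gives $h''(0)=2/3>0$, so $h''>0$ on $\R$, $h$ is strictly convex, and equality in the convexity bound holds exactly when $a=b$, i.e.\ $r=s$.

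For mixed signs, say $r>0$ and $s=-e^b<0$, the right-hand side is unchanged by the evenness of $\tilde\Phi$, whereas the left-hand side becomes $(a+b)\tanh((a+b)/2)=2g((a+b)/2)$ with $g(x):=x\tanh x$. Since $h(x)-g(x)=2x/\sinh(2x)>0$ for $x\ne 0$ (with limit $1$ at $0$), we have $g(x)<h(x)$ for all $x$, hence $2g((a+b)/2)<2h((a+b)/2)\le h(a)+h(b)$. The case $r,s<0$ reduces to $r,s>0$ via $rs=|r||s|$ and $\tilde\Phi(\pm|r|)=\tilde\Phi(|r|)$. The main obstacle is really the convexity of $h(x)=x\coth x$; once that is in hand everything else is bookkeeping, the most delicate piece being to verify the equality statement $r=s$ survives in the mixed-sign case (which it does automatically, since $r$ and $s$ having opposite signs forces $r\ne s$).
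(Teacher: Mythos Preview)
Your proof is correct and takes a genuinely different route from the paper's. The paper rewrites the inequality as
\[
\frac{s-r}{rs-1}\bigl(f(r)-f(s)\bigr)\ge 0,\qquad f(r)=\frac{r}{r^2-1}\log r^2,
\]
and then runs through a case-by-case analysis of the sign of $(s-r)/(rs-1)$ together with the monotonicity of $f$ on the intervals $(-\infty,-1)$, $(-1,1)$, $(1,\infty)$ and the symmetry $f(1/r)=f(r)$; the equality case $r=s$ is extracted separately at the end. Your substitution $r=e^a$, $s=e^b$ collapses all of this into the single identity $\Phi(e^{a+b})=2h((a+b)/2)$, $\tilde\Phi(e^a)=h(a)$ with $h(x)=x\coth x$, so the inequality becomes midpoint convexity of $h$, which you verify cleanly via $h''(x)=\tfrac{2}{\sinh^2 x}(x\coth x-1)>0$. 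The mixed-sign case you handle by the strict inequality $g<h$ with $g(x)=x\tanh x$, and the $r,s<0$ case by the evenness of $\tilde\Phi$. This is more conceptual and entirely avoids the paper's case enumeration; it also makes the equality characterization immediate from strict convexity, whereas the paper has to argue it separately by analyzing when $f(r)=f(s)$. The trade-off is that the paper's argument stays entirely within elementary real-variable manipulations and never introduces hyperbolic functions, but your approach is shorter and makes clearer why the inequality holds.
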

\begin{proof}
We need to prove that
$$
\frac{rs+1}{rs-1}\log (r^{2}s^2)\leq \frac{r^2+1}{r^2-1}\log r^2+\frac{s^2+1}{s^2-1}\log s^2
$$
This inequality is equivalent to 
$$
\left(\frac{r^2+1}{r^2-1}-\frac{rs+1}{rs-1}\right)\log r^2+\left(\frac{s^2+1}{s^2-1}-\frac{rs+1}{rs-1}\right)\log s^2\geq 0\;
$$
which is equivalent to
\begin{equation}\label{ff}
\frac{s-r}{rs-1}\left(f(r)-f(s)\right)\geq 0\;,
\end{equation}
where
$$
f(r)\equiv \frac{r}{r^2-1}\log r^2\;.
$$
That is, we need to prove that $f(r)\geq f(s)$ if $(s-r)/(rs-1)>0$ and $f(r)\leq f(s)$ if $(s-r)/(rs-1)<0$.
From symmetry under exchange of $r$ and $s$, both cases are equivalent, and therefore without lose of generality we assume 
$(s-r)/(rs-1)>0$.This inequality is satisfied if \textbf{(a)} $s>r$ and $rs>1$ or \textbf{(b)} $s<r$ and $rs<1$.
A simple analysis of the function $f(r)$ shows that $f$ is odd, and it is monotonically increasing for $-1\leq r\leq 1$ and monotonically decreasing for $|r|>1$. Moreover, note that $f(1/r)=f(r)$.

Consider case $\textbf{(a)}$: If $s>r>1$ then $f(r)\geq f(s)$ since $f$ is monotonically 
decreasing in this region. In the same way if $-1>s>r$ then $f(r)\geq f(s)$. Another possibility in this case is that
$0<r<1<1/r<s$. But since both $r$ and $1/s$ are positive and smaller than 1, we get $f(r)\geq f(1/s)=f(s)$, where we have used the fact that $f(r)$ is monotonically increasing for $|r|\leq 1$. The last possibility in this case is that $1/r>s>-1>r$. 
For this last possibility both $s$ and $1/r$ are negative numbers bigger than $-1$ and in this region $f$ is monotonically increasing. Thus, $f(r)=f(1/r)\geq f(s)$.

Consider case $\textbf{(b)}$: First note that if $s<0<r$ then $f(s)<0<f(r)$, and if $-1<s<r<1$ then $f(r)\geq f(s)$ since $f$ is monotonically increasing in this region. Another possibility in this case is that
$s<1<r<1/s$. But since both $r$ and $1/s$ are positive and bigger than 1, we get $f(r)\geq f(1/s)=f(s)$, where we have used the fact that $f(r)$ is monotonically decreasing for $r\geq 1$. 
Finally, the last possibility in this case is that $1/r<s<-1<r$. 
For this last possibility both $s$ and $1/r$ are negative numbers smaller than $-1$ and in this region $f$ is monotonically decreasing. Thus, $f(r)=f(1/r)\geq f(s)$. 

In order to prove the equality conditions, we need to show that the expression in Eq.(\ref{ff}) equals zero if and only if
$s=r$. Before proceeding to prove that, we check the case $r=1/s$. In this case, $\Phi(rs)=\Phi(1)=2$ and $\tilde{\Phi}(s)=\tilde{\Phi}(1/r)=\tilde{\Phi}(r)$. That is, if $r=1/s$ then the equality in Eq.~\eqref{rs} holds if and only if $\tilde{\Phi}(r)=1$. As pointed out earlier,
$\tilde{\Phi}(r)=1$ if and only if $r=\pm 1$. We therefore conclude that if $r=1/s$ than the equality in Eq.~(\ref{rs}) holds if and only if
$r=s=\pm 1$. Assume now $rs\neq 1$. In this case, the expression in Eq.(\ref{ff}) equals zero if and only if $f(r)=f(s)$. However,
a simple analysis of the function $f(r)$ implies that $f(r)=f(s)$ if and only if $r=s$ or $r=1/s$. Since we assumed $rs\neq 1$, we get
that $r=s$. This completes the proof.
\end{proof}

\subsection{Proof of Theorem~\ref{main}}

We can assume without loss of generality that $n_1=m_1$, $n_2=m_2$. This can always be done by adding zero rows or columns. However, in this part of the proof we also assume
that both $x^{(1)}$ and $x^{(2)}$ are non-singular. The singular case is treated separately in section~\ref{sing}.
From the singular valued decomposition (see the argument below definition~\ref{maindef}) we can assume 
without loss of generality that $x^{(1)}=\diag\{\sqrt{p_1},\ldots,\sqrt{p_{n_1}}\} $ 
and $x^{(2)}=\diag\{\sqrt{q_1},\ldots,\sqrt{q_{n_2}}\}$, 
where $p_i$ and $q_j$ are positive and $\sum_{i=1}^{n_1}p_i=\sum_{j=1}^{n_2}q_{j}=1$.

We first assume that both $x^{(1)}$ and $x^{(2)}$ are non-degenerate local minima.
We need to show that $D_{y}^{2}E(x)>0$ for all $y\in x^\perp$, where $x\equiv x^{(1)}\otimes x^{(2)}$.
The most general
$y\in \left(x^{(1)}\otimes x^{(2)}\right)^\perp$ can be written as
\begin{equation}\label{generalform}
y=c_1x^{(1)}\otimes y^{(2)}+c_2y^{(1)}\otimes x^{(2)}+c_3y'\;,
\end{equation}
where $y^{(1)}\in (x^{(1)})^\perp$, $y^{(2)}\in (x^{(2)})^\perp$, and $y'\in\left(x^{(1)}\right)^\perp\otimes \left(x^{(2)}\right)^\perp$
are all normalized. The numbers $c_j$ can be chosen to be real because we can absorb their phases in $y^{(1)}$, $y^{(2)}$, and $y'$. They also
satisfy $c_{1}^{2}+c_{2}^{2}+c_{3}^{2}=1$, so that $y$ is normalized.

Consider first the simple case where $y=x^{(1)}\otimes y^{(2)}$. In this case,
\begin{align}
E\left(\frac{x+ty}{\sqrt{1+t^2}}\right)& =E\left(x^{(1)}\otimes \frac{x^{(2)}+ty^{(2)}}{\sqrt{1+t^2}}\right)\nonumber\\
&=E\left(x^{(1)}\right)+E\left(\frac{x^{(2)}+ty^{(2)}}{\sqrt{1+t^2}}\right)\;.
\end{align}
Since $x^{(2)}$ is a non-degenerate local minimum, we must have $D_{y}^{2}E(x)>0$.
The case $y=y^{(1)}\otimes x^{(2)}$ is similar.
 
Consider now the case in which $y\in \left(x^{(1)}\right)^\perp\otimes \left(x^{(2)}\right)^\perp$. 
Using its Schmidt decomposition, we can write it as
 \begin{equation}\label{yprime}
 y=\sum_{l}c_l y^{(1)}_{l}\otimes y^{(2)}_{l}\;,
 \end{equation}
 where 
 \begin{equation}\label{ort}
 \tr[y^{(1)}_{l}y^{(1)*}_{l'}]= \tr[y^{(2)}_{l}y^{(2)*}_{l'}]=\delta_{ll'}\;,
 \end{equation}
 and $c_l$ are real numbers such that $\sum_{l}c_{l}^{2}=1$.
 
By definition we have
\begin{align}
M_{x}(y)
=\tr\left[w^{AB}\Phi_{\rho^A\otimes\rho^B}^{+}(w^{AB})+z^{AB}\Phi_{\rho^A\otimes\rho^B}^{-}(z^{AB})\right]
\;,
\label{tendelta}
\end{align}
where $w^{AB}=(y^*+y)/2$, 
$z^{AB}=i(y^*-y)/2$, $\rho^A\equiv x^{(1)}x^{(1)*}$ and $\rho^B\equiv x^{(2)}x^{(2)*}$.

Applying lemma~\ref{phi0} both to $\Phi_{\rho^A\otimes\rho^B}^{\pm}$ gives:
\begin{align*}
M_{x}(y) & \leq\tr\left[
w^{AB}\tilde{\Phi}_{\rho^A\otimes I^{B}}(w^{AB})+w^{AB}\tilde{\Phi}_{I^{A}\otimes\rho^B}(w^{AB})+z^{AB}\tilde{\Phi}_{\rho^A\otimes I^B}(z^{AB})+z^{AB}\tilde{\Phi}_{I^A\otimes \rho^B}(z^{AB})\right]\\
& =\tr\left[
y^{*}\tilde{\Phi}_{\rho^A\otimes I^B}(y)+y^{*}\tilde{\Phi}_{I^{A}\otimes\rho^B}(y)\right]\;.
\end{align*}
where $I^A$ and $I^B$ are the identity matrices in the respective spaces, and in the last equality we have used the definitions $w^{AB}=(y^*+y)/2$ 
and $z^{AB}=i(y^*-y)/2$.  Now, but substituting~\eqref{yprime} into the above equation we get
\begin{align*}
M_{x}(y)\leq\sum_{l}c_{l}^{2}\tr\left[
y_{l}^{(1)*}\tilde{\Phi}_{\rho^A}(y_{l}^{(1)})+y_{l}^{(2)*}\tilde{\Phi}_{\rho^B}(y_{l}^{(2)})\right]\;,
\end{align*}
where we have used the orthogonality relations in Eq.~\eqref{ort}.
Combining this with Eq.~\eqref{222} gives
\begin{equation}\label{laststep}
M_{x}(y)<\sum_{l}c_{l}^{2}\left(\Gamma_{x^{(1)}}(y^{(1)}_{l})+\Gamma_{x^{(2)}}(y^{(2)}_{l})\right)=\Gamma_{x}(y)
\end{equation}
where the last equality can be verified from the orthogonality relations given in Eq.~\eqref{ort}, and the fact that
\begin{equation}\label{log}
\log xx^*=\log x^{(1)}x^{(1)*}\otimes I^B+I^A\otimes\log x^{(2)}x^{(2)*}\;.
\end{equation}
This completes the proof for $y\in\left(x^{(1)}\right)^\perp\otimes \left(x^{(2)}\right)^\perp$. 

Consider now the most general case where $y\in x^\perp$ has the form given in Eq.~(\ref{generalform}).
Denote
\begin{equation}\label{hhh}
w^{AB}=\frac{1}{2}(y^{*}+y)=c_1x^{(1)}\otimes w^{(2)}+c_2w^{(1)}\otimes x^{(2)}+c_3w'\;,
\end{equation}
where $w'=(y'{}^{*}+y')/2$ and we have used
\begin{align*}
& \frac{1}{2}\left(x^{(1)*}\otimes y^{(2)*}+x^{(1)}\otimes y^{(2)}\right)
=x^{(1)}\otimes\frac{1}{2}\left(y^{(2)*}+y^{(2)}\right)\equiv x^{(1)}\otimes w^{(2)}\\
&\frac{1}{2}\left(y^{(1)*}\otimes x^{(2)*}+y^{(1)}\otimes x^{(2)}\right)
=\frac{1}{2}\left(y^{(1)*}+y^{(1)}\right)\otimes x^{(2)}\equiv w^{(1)}\otimes x^{(2)}\;.
\end{align*}
In the above equation we used the fact that $x^{(1)}$ and $x^{(2)}$ are square diagonal matrices with their singular values on the diagonal. We would like to substitute the expression in Eq.~\eqref{hhh} for $w^{AB}$, into the expression for $M_{x}(y)$ given in Eq.~\eqref{tendelta}.
By doing that we will get expressions with several cross terms. We argue that these cross terms vanish.
To see that consider for example the cross term
$$
c_1c_3\tr\left[x^{(1)}\otimes w^{(2)}\Phi_{\rho^A\otimes\rho^B}^{+}(w^{\prime})\right]\;,
$$
and recall that $\rho^A\equiv x^{(1)}x^{(1)*}$ and $\rho^B\equiv x^{(2)}x^{(2)*}$.
Since $\Phi_{\rho^A\otimes\rho^B}^{+}$ is self-adjoint, the above expression can be written as
$$
c_1c_3\tr\left[x^{(1)}\otimes w^{(2)}\Phi_{\rho^A\otimes\rho^B}^{+}(w^{\prime})\right]=c_1c_3\tr\left[w^{\prime}\Phi_{\rho^A\otimes\rho^B}^{+}(x^{(1)}\otimes w^{(2)})\right]=
c_1c_3\tr\left[w^{\prime}\left(x^{(1)}\otimes \Phi_{\rho^B}^{+}(w^{(2)})\right)\right]
$$
where in the last equality we used the identity $\Phi_{\rho^A\otimes\rho^B}^{+}(x^{(1)}\otimes w^{(2)})
= x^{(1)}\otimes \Phi_{\rho^B}^{+}(w^{(2)})$. This identity follows from the definition of $\Phi_{\rho}^{+}$, when working with a basis in which $x^{(1)}$ is diagonal. Now, since the partial trace $\tr_1[w' (x^{(1)}\otimes B)]=0$
for all matrices $B$, we have
$$
c_1c_3\tr\left[x^{(1)}\otimes w^{(2)}\Phi_{\rho^A\otimes\rho^B}^{+}(w^{\prime})\right]=0\;.
$$
In the same way, we see that all the other cross terms vanish.
Moreover, denote
$$
z^{AB}=\frac{i}{2}(y^{*}-y)=c_1x^{(1)}\otimes z^{(2)}+c_2z^{(1)}\otimes x^{(2)}+c_3z'\;,
$$
where  $z^{(1)}$, $z^{(2)}$, and $z'$ are defined similarly to $w^{(1)}$, $w^{(2)}$, and $w'$.
Substituting this expression for $z^{AB}$ in Eq.~\eqref{tendelta} will also lead to vanishing cross terms.
To summarize, by substituting the above expressions for $z^{AB}$ and $w^{AB}$ in Eq.~\eqref{tendelta} we get
$$
M_{x}(y)=c_{1}^{2}M_{x}(x^{(1)}\otimes y^{(2)})+c_{2}^{2}M_{x}(y^{(1)}\otimes x^{(2)})
+c_{3}^{2}M_{x}(y')
$$
However, since we already proved that $x$ is a non-degenerate local minimum in the directions $x^{(1)}\otimes y^{(2)}$,
$y^{(1)}\otimes x^{(2)}$, and $y'$, we get
\begin{equation}\label{gmm}
M_{x}(y)<c_{1}^{2}\Gamma_{x}(x^{(1)}\otimes y^{(2)})+c_{2}^{2}\Gamma_{x}(y^{(1)}\otimes x^{(2)})
+c_{3}^{2}\Gamma_{x}(y')
\end{equation}
Now, note the orthogonality relations in the partial traces: $\tr_1[(x^{(1)}\otimes y^{(2)})(y')^*]=\tr_2[(x^{(1)}\otimes y^{(2)})(y')^*]=0$ and
$\tr_1[(y^{(1)}\otimes x^{(2)})(y')^*]=\tr_2[(y^{(1)}\otimes x^{(2)})(y')^*]=0$.
With these relations and from Eq.~\eqref{log} 
we get that the expression in the RHS of Eq.(\ref{gmm}) is equal to
$\Gamma_{x}(y)$. 
This completes the proof of the main part of the theorem.

To prove the second part of the theorem, assume that $x^{(1)}$ is degenerate local minimum and $x^{(2)}$ is a non-degenerate local minimum. Following the exact same lines of the proof above we get that $M_{x}(y')<\Gamma_{x}(y')$ for 
$y'\in\left(x^{(1)}\right)^\perp\otimes \left(x^{(2)}\right)^\perp$. This is clear from Eq.~(\ref{laststep}) and the one above it, 
where we use the fact that
$$
\tr\left[y_{l}^{(2)*}\tilde{\Phi}_{\rho^B}(y_{l}^{(2)})\right]<\Gamma_{x^{(2)}}(y_{l}^{(2)})
$$
since $x^{(2)}$ is a non-degenerate local minimum. Similarly, if $y=x^{(1)}\otimes y^{(2)}$ we get 
$M_{x}(y)<\Gamma_{x}(y)$. The only $y\in x^\perp$ for which it is possible to have 
$M_{x}(y)=\Gamma_{x}(y)$ is $y=y^{(1)}\otimes x^{(2)}$. However, in this case 
\begin{equation}\label{xy}
E\left(\frac{x+ty}{\sqrt{1+t^2}}\right) =
E\left(\frac{x^{(1)}+ty^{(1)}}{\sqrt{1+t^2}}\right)+E\left(x^{(2)}\right)\;,
\end{equation}
so $x$ is a local minimum in this direction as well. Hence, $x$ is a degenerate local minimum.
This completes the proof of the second part of the theorem.
 
\section{The Singular Case}\label{sing}

In the previous section, we were able to derive the first and second directional derivatives $D_{y}^{1}E(x)$ and $D_{y}^{2}E(x)$ assuming $x$ is non-singular. 
In this section we consider the case where $x$ is singular. While the expression for $D_{y}^{1}E(x)$ is the same as in the previous section, the expression for the second derivative is not the same for the singular case. In particular, in the singular case it is possible that $D_{y}^{2}E(x)\equiv\frac{d^2}{dt^2 }S(\rho(t))\big|_{t=0}$ diverge.
Nevertheless, we will see in this section that even if $x$ is singular, $E(x)$ is additive.

For simplicity of the exposition, we will consider here subspaces $\mathcal{K}\subset\mathbb{C}^{n}\otimes\mathbb{C}^{m}$, where $n=m$, since we can always embed  $\mathcal{K}$ in $\mathbb{C}^{\max{\{n,m\}}}\otimes\mathbb{C}^{\max{\{n,m\}}}$. The following theorem provides the criterion for the divergence of the second derivative.

\begin{theorem}\label{secdersin}  Let $x,y\in\mathcal{K}\subset \C^{n\times n}$, $\Tr xx^*=\tr yy^*=1$ and $\tr(xy^*)=0$.  
Change the standard orthonormal base in $\C^n$ to a new orthonormal base such that $x$ and $y$ have the forms
\begin{equation}\label{xypartition}
x=\left[\begin{array}{cc} x_{11}&0_{r,n-r}\\0_{n-r,r}&0_{n-r,n-r}\end{array}\right]\;\text{ and  }\;\;
y=\left[\begin{array}{cc} y_{11}&y_{12}\\y_{21}&y_{22}\end{array}\right], 
\end{equation}
where $r$ is the rank of $x$, $0_{i,j}$ are $i\times j$ zero matrices, and $x_{11}, y_{11}\in \C^{r\times r}$. 
Then
 \begin{equation}\label{srhotexpan}
 S(\rho(t))=f(t) -(K+tg(t))t^2\log t^2, \quad K=\tr(y_{22}y_{22}^*),
 \end{equation}
 where  $f(t),g(t)$ are analytic functions in a neighbourhood of $0$.
 Hence $D^2_yE(x)=+\infty$ if and only if $y_{22}\ne 0$.  Furthermore, if $y_{22}=0$ then either $g(t)\equiv 0$ or $g(t)=at^{2k-1}(1+O(t))$, where
 $a>0$ and $k$ is a positive integer.
 \end{theorem}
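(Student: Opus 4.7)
The plan is to separate the spectrum of $\rho(t)$ into a ``large'' part (eigenvalues bounded away from zero) and a ``small'' part (eigenvalues going to zero as $t\to 0$), and show that only the small part produces the $t^2\log t^2$ singularity. The main tools are the Schur complement applied to the block form of $\rho(t)$ inherited from the partition~\eqref{xypartition}, combined with Rellich's theorem on analytic perturbation of Hermitian families.

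First, I would compute $\rho(t)=(x+ty)(x+ty)^*/(1+t^2)$ in the induced $2\times 2$ block form: the $(1,1)$-block $A(t)$ is analytic with $A(0)=x_{11}x_{11}^*>0$, the $(1,2)$-block is $O(t)$, and the $(2,2)$-block equals $t^2 C_2(t)$ with $C_2(0)=y_{21}y_{21}^*+y_{22}y_{22}^*$. The Schur complement factors $\det(\lambda I-\rho(t))$ as $\det(\lambda I-A(t))$ times $\det((\lambda I-C(t))-B(t)^*(\lambda I-A(t))^{-1}B(t))$. The $r$ large eigenvalues come from the first factor (analytic perturbations of the positive eigenvalues of $A(0)$). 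The $n-r$ small eigenvalues come from the second factor; substituting $\lambda=t^2 s$ pulls out $t^{2(n-r)}$ and leaves $\det(sI-N(s,t))$ with $N(s,t)$ jointly analytic near $(0,0)$. A short calculation using $x_{11}^*(x_{11}x_{11}^*)^{-1}x_{11}=I_r$ gives $N(0,0)=y_{22}y_{22}^*$, so each small eigenvalue has the form $\lambda_j(t)=t^2 s_j(t)$ with $\{s_j(0)\}$ the eigenvalues of $y_{22}y_{22}^*$.

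Rellich's theorem then lets me label the eigenvalues $\lambda_j(t)$ of the Hermitian analytic family $\rho(t)$ as real-analytic functions of $t$. Because $\rho(t)\ge 0$ and $\lambda_j(0)=0$ for the small eigenvalues, each such Taylor expansion must begin at an even positive power, giving $\lambda_j(t)=t^{2k_j}\beta_j(t)$ with $\beta_j$ analytic, $\beta_j(0)>0$, and $k_j\ge 1$ an integer. An eigenvalue has $k_j=1$ precisely when $\beta_j(0)$ is a nonzero eigenvalue of $y_{22}y_{22}^*$, so in the case $y_{22}=0$ every $k_j\ge 2$.

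To conclude, I would expand $S(\rho(t))=S_L(t)-\sum_j\lambda_j\log\lambda_j$ with $S_L$ (from the large eigenvalues) analytic, and split $\log\lambda_j=k_j\log t^2+\log\beta_j$. The pieces $-\lambda_j\log\beta_j=-t^{2k_j}\beta_j\log\beta_j$ are analytic (since $\beta_j(0)>0$) and absorbed into $f(t)$; what remains is $-t^2\log t^2\cdot H(t)$ with $H(t):=\sum_{k\ge 1}k\,t^{2k-2}B_k(t)$ and $B_k(t):=\sum_{j:k_j=k}\beta_j(t)$, each analytic and nonnegative. Evaluating at $t=0$ gives $H(0)=B_1(0)=\tr(y_{22}y_{22}^*)=K$, so $H(t)=K+tg(t)$ with $g$ analytic, which is precisely~\eqref{srhotexpan}. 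When $y_{22}=0$ we have $B_1\equiv 0$, hence $tg(t)=\sum_{k\ge 2}k\,t^{2k-2}B_k(t)$: either every $B_k\equiv 0$ (and $g\equiv 0$), or the smallest $k_\star\ge 2$ with $B_{k_\star}\not\equiv 0$ yields $g(t)=k_\star B_{k_\star}(0)\,t^{2k-1}(1+O(t))$ with $k=k_\star-1\ge 1$ and $a=k_\star B_{k_\star}(0)>0$, completing the proof. The main obstacle is the eigenvalue bookkeeping: producing the analytic labelling through possible crossings (Rellich) and verifying that the group sums $B_k(t)$, not merely the global trace $\sum_j\beta_j(t)$, are themselves analytic at $t=0$.
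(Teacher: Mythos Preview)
Your proof is correct and shares its skeleton with the paper's argument: both invoke Rellich's theorem to obtain a real-analytic labelling of the eigenvalues of the Hermitian family $\rho(t)$, observe that positivity forces each small eigenvalue to vanish to an even order $2k_j$, and read off the form~\eqref{srhotexpan} together with the ``furthermore'' clause from the resulting expansion of $-\sum_j\lambda_j\log\lambda_j$.

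The genuine difference is in how the constant $K$ is identified with $\tr(y_{22}y_{22}^*)$. The paper dilates to the $2n\times 2n$ Hermitian pencil
\[
X+tY=\begin{pmatrix}0&x+ty\\(x+ty)^*&0\end{pmatrix},
\]
whose nonnegative eigenvalues are the singular values of $x+ty$, and then applies first-order degenerate perturbation theory: the coefficients $\sigma_{i,1}$ of the zero eigenvalues are the eigenvalues of $PYP$ restricted to $\ker X$, which is $\left(\begin{smallmatrix}0&y_{22}\\y_{22}^*&0\end{smallmatrix}\right)$, giving $K=\sum\sigma_{i,1}^2=\tr(y_{22}y_{22}^*)$ in one line. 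You instead stay with $\rho(t)$ and use the Schur complement of its $(1,1)$ block: after the rescaling $\lambda=t^2 s$, the small eigenvalues are governed by $\det(sI-N(s,t))=0$ with $N(0,0)=y_{22}y_{22}^*$, whence $\lambda_j(t)/t^2\to$ an eigenvalue of $y_{22}y_{22}^*$ and $B_1(0)=\tr(y_{22}y_{22}^*)$. Your route avoids the dilation trick and is more self-contained; the paper's route is shorter and plugs directly into the standard $PYP$-on-the-kernel formula from \cite{Kato,Fri78}.

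One remark on the obstacle you flag at the end: the analyticity of each $B_k(t)$ is automatic and not a genuine difficulty. With the Rellich labelling each $\lambda_j$ is analytic with a zero of exact order $2k_j$ at $t=0$, so $\beta_j(t)=\lambda_j(t)/t^{2k_j}$ is individually analytic with $\beta_j(0)>0$, and $B_k$ is a finite sum of such functions. The only place care is needed is in matching the multiset $\{\lim_{t\to 0}\lambda_j(t)/t^2\}$ with the spectrum of $y_{22}y_{22}^*$, and your Schur-complement computation (together with continuity of roots, e.g.\ Hurwitz) handles that.
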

A much weaker version of the theorem above can be found in~\cite{FGA}. For the clarity of the exposition in this section,
we leave the proof of Theorem~\ref{secdersin} to appendix~\ref{appsec}.

From the theorem above it follows that w.l.o.g we can set $y_{22}=0$ since otherwise the second derivative is $+\infty$.
This will be useful when proving local additivity for the singular case. However, in the tensor product space, $y$ can be written as in Eq.(~\ref{yprime}). Hence, while we assume that the $(2,2)$ block of the bipartite state $y$ is zero, it is not immediately obvious that the $(2,2)$ blocks of the one-party states $y_{l}^{(1)}$ and $y_{l}^{(2)}$ are also zero. Nevertheless, this is indeed the case as we show now.

 \subsection{Tensor product structure in the singular case}

 Let $\mathcal{K}\subset \C^{n\times n}$ be a subspace of matrices that are partitioned as in Eq.~\eqref{xypartition}.
 We assume that $\mathcal{K}$ contains a matrix
 \begin{equation}\label{xform}
 x=\left[\begin{array}{cc} x_{11}&0\\0&0\end{array}\right], \quad \tr(x_{11}^*x_{11})=1.
 \end{equation}
 We now choose a following orthonormal base $x_1,\ldots,x_p, y_1,\ldots,y_q,z_1,\ldots,z_r,w_1,\ldots,w_s\in\mathcal{K}$.
 First, $x_1=x$.  Then
 \begin{enumerate}
 \item
 $x_1,\ldots,x_p$ is an orthonormal basis of the subspace of $\mathcal{K}$ of matrices of the form $\left[\begin{array}{cc} *&0\\0&0\end{array}\right]$.
 (It is possible that $p=1$.)
 \item
 $x_1,\ldots,x_p,y_1,\ldots,y_q$ is an orthonormal basis of the subspace of $\mathcal{K}$ of matrices of the form $\left[\begin{array}{cc} *&*\\0&0\end{array}\right]$.
 (It is possible that $q=0$.)
 \item
 $x_1,\ldots,x_p,y_1,\ldots,y_q,z_1,\ldots,z_r$ is an orthonormal basis of the subspace of $\mathcal{K}$ of matrices of the form
 $\left[\begin{array}{cc} *&*\\ *&0\end{array}\right]$.
 (It is possible that $r=0$.)
 \item
 $x_1,\ldots,x_p,y_1,\ldots,y_q,z_1,\ldots,z_r,w_1,\ldots,w_s$ is an orthonormal basis of $\mathcal{K}$.
 (It is possible that $s=0$.)
 \end{enumerate}
 We observe the following
 \begin{enumerate}
 \item The projections of $x_1,\ldots,x_p$ on the block $(1,1)$ are linearly independent.
 \item The projections of $y_1,\ldots,y_q$ on the block $(1,2)$ are linearly independent if $q\ge 1$.
 \item The projections of $z_1,\ldots,z_r$ on the block $(2,1)$ are linearly independent if $r\ge 1$.
 \item The projections of $w_1,\ldots,w_s$ on the block $(2,2)$ are linearly independent if $s\ge 1$.
 \end{enumerate}

 We now consider two subspaces $\mathcal{K}_i\subset \C^{n_i\times n_i}$ for $i=1,2$.  We consider here the most complicated case
 in which both matrices $x^{(1)}\in\mathcal{K}_1$ and $x^{(2)}\in\mathcal{K}_2$ are singular.  So we assume that each $x^{(i)}$ has the form \eqref{xform}.
For $i=1,2$ we form orthonormal bases
 \begin{equation}\label{bases}
 x_1^{(i)},\ldots,x_{p_i}^{(i)}, y_1^{(i)},\ldots,y_{q_i}^{(i)},z_1^{(i)},\ldots,z_{r_i}^{(i)},w_1^{(i)},\ldots,w_{s_i}^{(i)}\in\mathcal{K}_i
 \end{equation}
 exactly as above.  
 We now form a tensor product of $\mathcal{K}_1\otimes\mathcal{K}_2$ with respect to the partitions of $\mathcal{K}_1,\mathcal{K}_2$ as above.  
 
 Let
 \begin{equation}\label{ABpart}
 A=\left[\begin{array}{cc} A_{11}&A_{12}\\ A_{21}&A_{22}\end{array}\right]\in\mathcal{K}_1,\;
 B=\left[\begin{array}{cc} B_{11}&B_{12}\\ B_{21}&B_{22}\end{array}\right]\in\mathcal{K}_2.
 \end{equation}
 We then agree that the partition in $\mathcal{K}_1\otimes\mathcal{K}_2$ is of the form as the following partition of $A\otimes B$:
 \begin{equation}\label{AtensBpart}
 A\otimes B=\left[\begin{array}{cccc} A_{11}\otimes B_{11}&A_{11}\otimes B_{12}& A_{12}\otimes B_{11}&A_{12}\otimes B_{12}\\
 A_{11}\otimes B_{21}&A_{11}\otimes B_{22}& A_{12}\otimes B_{21}&A_{12}\otimes B_{22}\\
 A_{21}\otimes B_{11}&A_{21}\otimes B_{12}& A_{22}\otimes B_{11}&A_{22}\otimes B_{12}\\
 A_{21}\otimes B_{21}&A_{21}\otimes B_{22}& A_{22}\otimes B_{21}&A_{22}\otimes B_{22}
 \end{array}\right].
 \end{equation}

 \begin{lemma}\label{tensorlem1}  Let $\mathcal{K}_1,\mathcal{K}_2$ be two subspaces in $\C^{n_{1}\times n_{1}}$ and $\C^{n_{2}\times n_{2}}$, respectively. Let $C=[C_{ij}]_{i,j=1}^4\in\mathcal{K}_1\otimes\mathcal{K}_2$ be partitioned as
 in \eqref{AtensBpart}.  Suppose that $C\ne 0$ and $C_{ij}=0$ for $i,j\ge 2$.  Write $C$ as a linear combination of the tensor products of the bases of $\mathcal{K}_1$ and $\mathcal{K}_2$, chosen as in~\eqref{bases}.  Then each term in this linear combination of $C$ is of the form $\alpha f\otimes g$,
 where $\alpha\in\mathbb{C}$, $f\in\mathcal{K}_1$, $g\in\mathcal{K}_2$, and both $f$ and $g$ have the form 
 $\left[\begin{array}{cc} *&*\\ *&0\end{array}\right]$.
 \end{lemma}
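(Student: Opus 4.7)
I would expand $C = \sum_{a,b} c_{ab}\, e_a \otimes f_b$ in the tensor-product basis built from~\eqref{bases}, and reformulate the lemma as the claim that $c_{ab} = 0$ whenever $e_a$ lies in the $w^{(1)}$-subfamily or $f_b$ lies in the $w^{(2)}$-subfamily (since the remaining three subfamilies $x,y,z$ are precisely the basis vectors with vanishing $(2,2)$ block, i.e.\ of the form $\bigl[\begin{smallmatrix} * & * \\ * & 0\end{smallmatrix}\bigr]$). The key structural input is the ``staircase'' pattern from the construction of the bases and from observations 1--4: the $(2,2)$ block vanishes on $x,y,z$; the $(2,1)$ block vanishes on $x,y$; the $(1,2)$ block vanishes on $x$; and within each subfamily the blocks $(1,1), (1,2), (2,1), (2,2)$ (respectively for $x,y,z,w$) are linearly independent. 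Combined with~\eqref{AtensBpart}, this dictates exactly which pairs $(e_a, f_b)$ can contribute to each of the nine vanishing blocks $C_{ij}$, $i,j\ge 2$.

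The plan is to peel off the $w$-coefficients by working through these blocks in order of ``distance'' from the $(4,4)$ corner. \textbf{Step 1:} $C_{44} = \sum c_{ab} (e_a)_{22}\otimes (f_b)_{22} = 0$ only receives contributions from $w^{(1)}\otimes w^{(2)}$, and linear independence of $\{(w_i^{(1)})_{22}\}$ and $\{(w_j^{(2)})_{22}\}$ gives $c_{w_i^{(1)}, w_j^{(2)}} = 0$. \textbf{Step 2:} After substituting this, $C_{43} = \sum c_{ab} (e_a)_{22}\otimes (f_b)_{21} = 0$ collapses to a sum over $w^{(1)}\otimes z^{(2)}$ and linear independence of $\{(z_j^{(2)})_{21}\}$ forces $c_{w_i^{(1)}, z_j^{(2)}} = 0$; symmetrically $C_{42} = 0$ forces $c_{z_i^{(1)}, w_j^{(2)}} = 0$. \textbf{Step 3:} Using Steps 1--2, $C_{34} = 0$ collapses to $w^{(1)}\otimes y^{(2)}$ and kills $c_{w_i^{(1)}, y_j^{(2)}}$, while $C_{24} = 0$ kills $c_{y_i^{(1)}, w_j^{(2)}}$. \textbf{Step 4:} Finally $C_{33} = 0$, after substituting all of the above, reads $\sum_{i,j} c_{w_i^{(1)}, x_j^{(2)}} (w_i^{(1)})_{22}\otimes (x_j^{(2)})_{11} = 0$ and yields $c_{w_i^{(1)}, x_j^{(2)}} = 0$ by linear independence of $\{(x_j^{(2)})_{11}\}$; symmetrically $C_{22} = 0$ kills $c_{x_i^{(1)}, w_j^{(2)}}$.

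The only real obstacle is the bookkeeping: at each step one must verify that, after the previously-killed coefficients are discarded, the remaining tensor products are genuinely linearly independent (for which it suffices, by a standard fact, that each factor is a linearly independent set). The staircase structure of the four-group basis is precisely what delivers this: on the $\mathcal{K}_2$ side the distinguishing block sweeps through $(2,2)\to(2,1)\to(1,2)\to(1,1)$, isolating one subfamily at a time ($w\to z\to y\to x$), and analogously on the $\mathcal{K}_1$ side. Note that the two hypotheses $C_{23}=0$ and $C_{32}=0$ are not needed.
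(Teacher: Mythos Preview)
Your proposal is correct and follows essentially the same approach as the paper's proof: both peel off the $w$-coefficients block by block, starting from $C_{44}$ and using the linear independence of the distinguished block projections from observations 1--4 to conclude at each stage that the remaining contributing coefficients must vanish. The only cosmetic differences are the order in which the blocks are processed (you interleave the two sides symmetrically, the paper does all of $w^{(1)}\otimes\,\cdot\,$ first and then all of $\,\cdot\,\otimes w^{(2)}$) and that the paper phrases each step as a contradiction rather than as solving a linear system; your observation that $C_{23}$ and $C_{32}$ are not needed is also implicit in the paper's argument.
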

 
 \begin{remark}
 It is also possible to show that at least one of the matrices $f$ and $g$ must have the form $\left[\begin{array}{cc} *&*\\ 0&0\end{array}\right]$  or $\left[\begin{array}{cc} *&0\\ *&0\end{array}\right]$. However, we will not be using it here.
 \end{remark}
  
 \begin{proof}
 Suppose the expansion of $C$ contains a term of the form $w^{(1)}_i\otimes w^{(2)}_j$.  Look at the block $(4,4)$.  The contribution of the
 expansion of C to this block only comes from the tensor products projections of $w^{(1)}_i$ and $w^{(2)}_j$ on the block $(2,2)$.
 Since all these projections are linearly independent we must have that $C_{44}\ne 0$ contrary to our assumption.

 Assume now that the expansion of $C$ contains $w_{i}^{(1)}\otimes z_j^{(2)}$.   Since  the expansion of $C$ does not have terms  $w^{(1)}_i\otimes w^{(2)}_j$, the contribution to the block
 $C_{43}$ comes only from the projection of $w_{i}^{(1)}$ on the block $(2,2)$ and the projection of $z_j^{(2)}$ on the block $(2,1)$.
 Again as all these projections are linearly independent we deduce that $C_{43}\ne 0$, contrary to our assumptions.
 
Similarly, there are no terms in the expansion of $C$ of the form $w_i^{(1)}\otimes y_j^{(2)}$,  since $C_{34}=0$,  and there are no terms in the expansion of $C$ of the form $w_i^{(1)}\otimes x_j^{(2)}$ since $C_{33}=0$.
That is, we have shown that the matrices $w_{i}^{(1)}$ do not appear in the expansion of $C$. 
In exactly the same way, there are no terms in the expansion  of $C$ of the form $z_{i}^{(1)}\otimes w_j^{(2)}$,
$y_{i}^{(1)}\otimes w_j^{(2)}$, and $x_{i}^{(1)}\otimes w_j^{(2)}$, since $C_{42}=0$, $C_{24}=0$, and
$C_{22}=0$, respectively. This completes the proof.
 \end{proof}

\subsection{Local additivity in the singular case}

In this subsection we prove Theorem~\ref{main} for the case in which $x^{(1)}$ and $x^{(2)}$ are \emph{singular} local minima of $\mathcal{K}^{(1)}$ and $\mathcal{K}^{(2)}$, respectively.
We therefore choose bases such that $x^{(1)}$ and $x^{(2)}$ are of the form given in Eq.~\eqref{xypartition}, and denote by $r_1$ and $r_2$ their respective ranks. 

Assume first that both $x^{(1)}$ and $x^{(2)}$ are \emph{non-degenerate} local minima.
We need to show that $D_{y}^{2}E(x)>0$ for all $y\in x^\perp$, where 
$x\equiv x^{(1)}\otimes x^{(2)}$. Note that the partition of $x=[x_{ij}]_{i,j=1}^{4}$ as in Eq.~(\ref{AtensBpart})
gives $x_{ij}=0$ for all $i,j=1,2,3,4$ except for $x_{11}=x^{(1)}_{11}\otimes x^{(2)}_{11}$.

The most general
$y\in \left(x^{(1)}\otimes x^{(2)}\right)^\perp$ can be written as in Eq.~(\ref{generalform}), where $y'$ is of the form given in Eq.~(\ref{yprime}). Consider now
the partition of $y=[y_{ij}]_{i=j=1}^4$ as in Eq.(\ref{AtensBpart}).
From Theorem~\ref{secdersin} we know that $D_{y}^{2}E(x)=+\infty$ unless
$y_{ij}=0$ for all $i,j=2,3,4$. We therefore assume now that $y_{ij}=0$ for all $i,j=2,3,4$.
In this case, Lemma~\ref{tensorlem1} implies that all the matrices $y_{l}^{(1)}$ and $y_{l}^{(2)}$ 
in Eq.~(\ref{yprime}) have the form 
$$
\left[\begin{array}{cc} *&*\\ *&0\end{array}\right].
$$
That is, their $(2,2)$ block is zero. For this reason, 
we replace each subspace $\mathcal{K}^{(i)}\subset\C^{n_i\times n_i}$ $(i=1,2)$ with a smaller subspace
$\mathcal{U}^{(i)}\subset \mathcal{K}^{(i)}$ such that each matrix in the basis of $\mathcal{U}^{(i)}$
has zeros on the (2,2) block.  It is left to prove that $x\equiv x_1\otimes x_2$ is local
minimum in $\mathcal{U}^{(1)}\otimes \mathcal{U}^{(2)}$.

Consider the new subspace $\mathcal{U}^{(i)}_{\epsilon}$, for $\epsilon>0$,
where in the orthonormal basis of $\mathcal{U}^{(i)}$, we change only the first
matrix $x^{(i)}$, i.e. the local minimum matrix, with the normalized diagonal matrix
$$
x^{(i)}_{\epsilon}\equiv\frac{1}{\sqrt{1+(n_i-r_i)\epsilon^2}}
\left[\begin{array}{cc} x_{11}^{(i)}&0_{r_i,n_i-r_i}\\ 0_{n_i-r_i,r_i}& \epsilon I_{n_i-r_i}\end{array}\right]\;\;\;i=1,2
$$
where $0_{i,j}$ are $i\times j$ zero matrices and $I_{n_i-r_i}$ are $(n_i-r_i)\times(n_i-r_i)$ identity matrices.

\begin{lemma}\label{yxeps}
Assume $x^{(i)}$ is a non-degenerate local minimum in $\mathcal{U}^{(i)}$, then
$x^{(i)}_{\epsilon}$ is a non-degenerate local minimum in $\mathcal{U}^{(i)}_{\epsilon}$. Moreover, there exists $\delta>0$ and $\epsilon_0>0$ such that if $\epsilon<\epsilon_0$ then $D_{y^{(i)}}^{2}E(x^{(i)}_{\epsilon})>\delta$
for all $y^{(i)}\in\left(x^{(i)}_{\epsilon}\right)^\perp$.
\end{lemma}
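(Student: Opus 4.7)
The plan is to establish uniform convergence $D_{y}^{2}E(x^{(i)}_{\epsilon})\to D_{y}^{2}E(x^{(i)})$ as $\epsilon\to 0^{+}$ over a fixed compact sphere $\mathcal{S}$ of unit directions; combined with the non-degeneracy hypothesis at $x^{(i)}$, this yields the uniform lower bound $\delta$, which simultaneously establishes that $x^{(i)}_{\epsilon}$ is a non-degenerate local minimum in $\mathcal{U}^{(i)}_{\epsilon}$. First I identify the space of admissible directions. By construction $\mathcal{U}^{(i)}_{\epsilon}$ has orthonormal basis $\{x^{(i)}_{\epsilon},v_{2},\ldots,v_{d_{i}}\}$ where the $v_{l}$ are the remaining basis elements of $\mathcal{U}^{(i)}$, each having zero $(2,2)$-block. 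Hence any unit $y\in(x^{(i)}_{\epsilon})^{\perp}$ lies in $\mathrm{span}(v_{2},\ldots,v_{d_{i}})$, has $y_{22}=0$, and since $x^{(i)}_{\epsilon}-c^{-1}x^{(i)}$ (with $c=\sqrt{1+(n_{i}-r_{i})\epsilon^{2}}$) sits entirely in the $(2,2)$-block, the orthogonality $\tr(y\,x^{(i)*}_{\epsilon})=0$ also forces $\tr(y\,x^{(i)*})=0$. Thus $\mathcal{S}$ is a fixed compact set independent of $\epsilon$. Criticality of $x^{(i)}_{\epsilon}$ is immediate: expanding $D_{y}^{1}E(x^{(i)}_{\epsilon})=-2\,\mathrm{Re}\,\tr(x^{(i)}_{\epsilon}y^{*}\log x^{(i)}_{\epsilon}x^{(i)*}_{\epsilon})$ block-by-block and using $\tr(x^{(i)}_{11}y_{11}^{*})=\tr(x^{(i)}y^{*})=0$, the trace reduces (up to the factor $1/c$) to the corresponding expression at $x^{(i)}$, which vanishes by assumption.

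For the second-derivative limit I apply Lemma~\ref{lemsecder} at the non-singular $x^{(i)}_{\epsilon}$ to obtain $\tfrac{1}{2}D_{y}^{2}E(x^{(i)}_{\epsilon})=\Gamma_{x^{(i)}_{\epsilon}}(y)-M_{x^{(i)}_{\epsilon}}(y)$. The spectrum of $x^{(i)}_{\epsilon}x^{(i)*}_{\epsilon}$ is $p_{j}/c^{2}$ for $j\le r_{i}$ and $\epsilon^{2}/c^{2}$ with multiplicity $n_{i}-r_{i}$, so the only potential divergence as $\epsilon\to 0$ comes from index pairs $(j,k)$ straddling the two eigenvalue groups. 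Using $\Phi(\pm\sqrt{p_{j}}/\epsilon)\sim\tfrac{1}{2}\log(p_{j}/\epsilon^{2})$, together with Hermiticity of $w=(y+y^{*})/2$ and $z=i(y-y^{*})/2$, the $\log(1/\epsilon^{2})$-coefficient of $M_{x^{(i)}_{\epsilon}}(y)$ works out to $\tfrac{1}{2}(\|y_{12}\|^{2}+\|y_{21}\|^{2})$; this matches exactly the $\log(1/\epsilon^{2})$-coefficient of $\Gamma_{x^{(i)}_{\epsilon}}(y)$ obtained by evaluating $-\tfrac{1}{2}\tr[(y^{*}y+yy^{*})\log(x^{(i)}_{\epsilon}x^{(i)*}_{\epsilon})]$ on the $(2,2)$-block (using $y_{22}=0$). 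Thus the leading divergences cancel, and the subleading corrections are $O(\epsilon)+O(\epsilon^{2}\log(1/\epsilon))$ with constants polynomial in the entries of $y$, hence uniformly bounded over $\mathcal{S}$. By Theorem~\ref{secdersin} applied to $x^{(i)}$ in direction $y$ (with $K=\|y_{22}\|^{2}=0$), the pointwise limit is the analytic $f''(0)=D_{y}^{2}E(x^{(i)})$.

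To conclude, the non-degeneracy hypothesis gives $D_{y}^{2}E(x^{(i)})>0$ for every $y\in\mathcal{S}$; continuity of the second derivative at the singular point on the compact set $\mathcal{S}$ then yields $\delta_{0}:=\min_{y\in\mathcal{S}}D_{y}^{2}E(x^{(i)})>0$. Using the uniform convergence above I choose $\epsilon_{0}>0$ so that $|D_{y}^{2}E(x^{(i)}_{\epsilon})-D_{y}^{2}E(x^{(i)})|<\delta_{0}/2$ for all $\epsilon<\epsilon_{0}$ and $y\in\mathcal{S}$; setting $\delta:=\delta_{0}/2$ delivers the claim.

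The main obstacle is the uniform cancellation of the $\log(1/\epsilon^{2})$-divergences in $\Gamma_{x^{(i)}_{\epsilon}}$ and $M_{x^{(i)}_{\epsilon}}$, together with uniform control of the $O(\epsilon)$ and $O(\epsilon^{2}\log\epsilon)$ remainders. This requires a careful Taylor expansion of $\Phi(\pm r_{jk})$ for the index pairs bridging the two eigenvalue groups of $x^{(i)}_{\epsilon}x^{(i)*}_{\epsilon}$, plus bookkeeping of the normalization $c$; the cancellation itself is already anticipated by Theorem~\ref{secdersin}, which asserts that the limit second derivative at the singular $x^{(i)}$ is finite exactly when $y_{22}=0$.
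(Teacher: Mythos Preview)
Your proposal is correct and follows essentially the same route as the paper: identify the $\epsilon$-independent compact sphere of directions (using $y_{22}=0$), verify criticality of $x^{(i)}_{\epsilon}$, show that $D_{y}^{2}E(x^{(i)}_{\epsilon})\to D_{y}^{2}E(x^{(i)})$ uniformly, and conclude by compactness. The only differences are cosmetic: the paper carries out the cancellation of the $\log\epsilon^{2}$ divergence directly from Eq.~(\ref{se}) (Appendix~\ref{appb}) rather than via the $\Phi$-form of Lemma~\ref{lemsecder}, and the paper records the remainder as $O(\epsilon\log\epsilon^{2})$ rather than your $O(\epsilon)+O(\epsilon^{2}\log(1/\epsilon))$---indeed, expanding $\Phi(\sqrt{p_{j}}/\epsilon)$ gives a subleading term of order $\epsilon\log\epsilon$, so your stated order is slightly off, though this has no bearing on the argument.
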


\begin{proof}
For simplicity of the exposition we remove the superscript $(i)$ from $x^{(i)}$ and denote $d\equiv n-r$. That is, consider
$$
x=
\left[\begin{array}{cc} x_{11}&0_{r,d}\\ 0_{d,r}& 0_{d,d}\end{array}\right]\;\text{and}\;
x_{\epsilon}\equiv\frac{1}{\sqrt{1+d\epsilon^2}}
\left[\begin{array}{cc} x_{11}&0_{r,d}\\ 0_{d,r}& \epsilon I_{d}\end{array}\right]\;.
$$
We need to show that if $x$ is a non-degenerate local minimum in $\mathcal{U}$ then $x_\epsilon$ is a non-degenerate local minimum in $\mathcal{U}_\epsilon$ for small enough $\epsilon$.

First, we need to show that $x_\epsilon$ remains critical. Indeed, since the condition~\eqref{critical} for criticality
is satisfied for $x$, it is also satisfied for $x_\epsilon$. This is because $x_\epsilon$ is a diagonal matrix and all $y\in x_{\epsilon}^{\perp}\subset\mathcal{U}$ is of the form 
$
\left[\begin{array}{cc} *&*\\ *&0_{d,d}\end{array}\right]\;.
$

Second, we need to show that $D_{y}^{2}E(x_\epsilon)>\delta$. In Appendix~\ref{appb} we show
that $D_{y}^{2}E(x_\epsilon)$ does \emph{not} diverge in the limit $\epsilon\to 0$ (assuming $y_{jk}=0$ when both $j>r$ and $k>r$). Now, since we assume $D_{y}^{2}E(x)>0$
for all $y\in x^\perp$, we
can also assume that there exist $\delta'>0$ such that $D_{y}^{2}E(x)>\delta'$ for all $y\in x^\perp$. This is true because the set of all normalized matrices in $x^\perp$ is compact. Hence, from the nice behaviour of  $D_{y}^{2}E(x_\epsilon)$ in the limit
$\epsilon\to 0$ (see Appendix~\ref{appb}), we get that
for small enough $\epsilon$ there exists $\delta>0$
such that $D_{y}^{2}E(x_\epsilon)>\delta$ for all $y\in (x_\epsilon)^\perp$. This completes the proof of the lemma.
\end{proof}

We now apply Theorem~\ref{main} to the non-singular case of $x_\epsilon\equiv x^{(1)}_\epsilon\otimes  x^{(2)}_\epsilon$.
From Lemma~\ref{yxeps}, the second derivatives $D_{y^{(i)}}^{2}E(x^{(i)}_{\epsilon})>\delta$
for all $y^{(i)}\in\left(x^{(i)}\right)^\perp$ and $i=1,2$. Thus, we get that $D_{y}^{2}E(x_\epsilon)>2\delta$ for all $y\in (x_\epsilon)^\perp$. We obtain it by following precisely the same steps of the proof of Theorem~\ref{main} (in the non-singular case). Letting $\epsilon\to 0$ we deduce that in the direction
of $y$ the second derivative at $x^{(1)}\otimes x^{(2)}$ is strictly positive (greater or equal to $2\delta$). This complete the proof of the main part of theorem~\ref{main} for the singular case.

To proof the second part of the theorem, we assume now that $x^{(1)}$ is degenerate local minimum and $x^{(2)}$ is non-degenerate local minimum. In this case we only have $D_{y^{(2)}}^{2}E(x^{(2)}_{\epsilon})>\delta$. Nevertheless,
in Appendix~\ref{appb} we show that $D_{y^{(1)}}^{2}E(x^{(1)}_{\epsilon})-D_{y^{(1)}}^{2}E(x^{(1)})$ is of order $\epsilon\log\epsilon^2$. Therefore, since $D_{y^{(1)}}^{2}E(x^{(1)})\geq 0$, it follows that we can choose $\epsilon$ small enough such that $D_{y^{(1)}}^{2}E(x^{(1)}_{\epsilon})>-\delta/2$.

As pointed out in the proof of the non-singular case of theorem~\ref{main}, the only $y\in x^\perp$ (recall $x\equiv x^{(1)}\otimes x^{(2)}$) for which it is possible to have 
$D_{y}^{2}E(x)=0$ is $y=y^{(1)}\otimes x^{(2)}$. However, the equality in Eq.~\eqref{xy} implies
that $x$ is a local minimum in this direction and this is also true even if $x^{(i)}$ are singular. We will therefore assume now that $y$ is not of the form $y^{(1)}\otimes x^{(2)}$.
By following precisely the same steps of the proof of Theorem~\ref{main} (in the non-singular case) we get that
for all other $y\in x^\perp$ we have $D_{y}^{2}E(x_\epsilon)>\delta-\delta/2=\delta/2$. We therefore get  
$D_{y}^{2}E(x)>0$ in the limit $\epsilon\to 0$. This completes the proof of the second part of theorem~\ref{main}.

\section{Discussion}\label{conc}

We have shown that the minimum entropy output of a quantum channel is locally additive 
(assuming at least one of the two local minima is non-degenerate). 
Our proof consists of two key ingredients.  The first one is
the use of the divided difference approach, which enabled us to calculate directional derivatives explicitly, and
the second one is the explicit use of the complex structure. In the appendix B of~\cite{FGA} we show that there exists counterexamples for local additivity over the real numbers. These counterexamples precludes the existence of 
a more straightforward differentiation argument than the complex structure based argument given here.

The fact that the minimum entropy output is not globally additive makes local additivity of even greater interest
to quantum information theorists. It suggests that it is some global feature, of the quantum channels involved, that corresponds to cases of non-additivity of the minimum entropy output. 
Perhaps one way to improve our understanding in this direction is to study properties of generic channels.
In particular, it seems quite possible to us that for generic channels (or generic subspaces) the entropy output have a 
\emph{finite} number of isolated non-degenerate critical points.

\emph{Acknowledgments:---}
We acknowledge many fruitful discussions with A. Roy and J. Yard in the earlier stages of this work.
GG research is supported by NSERC. The authors acknowledge support from PIMS CRG MQI, MITACS, and iCore for Shmuel Friedland's visits to IQIS in Calgary.

\begin{appendix}
\section{\label{appsec}Proof of Theorem~\ref{secdersin}}

 \proof  Let $\lambda_1(t)\ge\ldots\ge \lambda_n(t)\ge 0$, for $t>0$, be the eigenvalues of $\rho(t)$.  Rellich's theorem yields that
 each $\lambda_i(t)$ is analytic in $t$ in a neighbourhood of $t=0$.  So $\lambda_i(0)=\lambda_i(\rho)>0$ for $i=1,\ldots,r$ and $\lambda_i(0)=0$
 for $i=r+1,\ldots,n$. Since each $\lambda_i(t)\ge 0$ it follows that the Taylor expansion of each  $\lambda_i(t)\not \equiv 0$, for $i>r$,  
 must start with $t$ to a positive even power times a positive constant.   I.e. 
 $\lambda_i(t)=\lambda_{i, 2n_i}t^{2n_i}(1+O(t))$, where $\lambda_{i,2n_i}>0$ and $n_i$ is a positive integer for $i>r$.  This shows that $S(\rho(t))=-\sum_{i=1}^n \lambda_i(t)\log\lambda_i(t)$ must be of the form \eqref{srhotexpan}.  Furthermore, $K=0$ if and only if $n_i\ge 2$ for all $i>r$.  So if $K=0$ and not all $\lambda_i(t)$ are identically zero for $i>r$, then $k=\min\{n_i-1, \lambda_{i, 2n_i}>0\}$.

 It is left to show that $K=\tr(y_{22}y_{22}^*)$.
 Let $X=\left[\begin{array}{cc}0&x\\x^*&0\end{array}\right], Y=\left[\begin{array}{cc}0&y\\y^*&0\end{array}\right]$.
 Recall that the pencil $X+tY$ has $n$ nonnegaive and $n$ nonpositive eigenvalues
 \[\sigma_1(t)\ge\ldots\ge\sigma_n(t)\ge 0\ge-\sigma_n(t)\ge\ldots\ge-\sigma_1(t).\]
 The singular values of $x+ty$ are the $n$ nonnegative eigenvalues of $X+tY$.  Hence, the eigenvalues of $\rho(t)$
 are $\frac{\sigma_i(t)^2}{1+t^2}$ for $i=1,\ldots,n$.  Let $\sigma_i(t)=\sigma_{i,1}t +O(t^2)$ for $t>0$ and $i>r$.
 Hence the coefficient of $t^2$ in the $i$-th eigenvalue of $\rho(t)$, for $i>r$, is $\sigma_{i,1}^2$.
 Thus $K=\sum_{i=r+1}^n \sigma_{i,1}^2$.\\

 Let $P\in \C^{2n\times 2n}$ be the orthogonal projection on the zero eigenspace of $X$.  Then $PYP((I-P)\C^{2n})=\0$.
 The other possible nonzero eigenvalues of $PYP$ are $\sigma_{r+1,1}\ge\ldots\ge\sigma_{n,1}\ge 0\ge -\sigma_{n,1}\ge\ldots\ge -\sigma_{r+1,1}$,
 which are the eigenvalues of the restriction of $PYP$ on the kernel of $\rho$ \cite{Fri78,Kato} or \cite[\S3.8]{Fri10}.  The restriction of $PYP$ to the kernel
 of $X$ is $\left[\begin{array}{cc}0&y_{22}\\y_{22}^*&0\end{array}\right]$, obtained by deleting the corresponding rows and the columns in $Y$.
 Hence
 \[2K=2\sum_{i=r+1}^n \sigma_{i,1}^2 =\tr((PYP)^2)=\tr(y_{22}y_{22}^*+y_{22}^*y_{22}).\]
 This completes the proof.\qed

\section{Formula for the second derivative in the singular case\label{appb}}

\begin{proposition}
Let 
\begin{equation}\label{xypartition2}
x_\epsilon=\left[\begin{array}{cc} x_{11}&0_{r,n-r}\\0_{n-r,r}&\epsilon I_{n-r,n-r}\end{array}\right]\;\text{ and  }\;\;
y=\left[\begin{array}{cc} y_{11}&y_{12}\\y_{21}&0_{n-r,n-r}\end{array}\right], 
\end{equation}
where $0_{i,j}$ are $i\times j$ zero matrices, and $x_{11}, y_{11}\in \C^{r\times r}$, $y_{12}\in \C^{r\times n}$,
$y_{21}\in \C^{n\times r}$. We also assume that $x_{11}=\diag\{\sqrt{p_1},...,\sqrt{p_r}\}$ is non singular. Then, the limit of $D_{y}^{2}E(x_\epsilon)$ when $\epsilon$ goes to zero exists and equals to
\be\label{mainappen}
\lim_{\epsilon\to 0}D_{y}^{2}E(x_\epsilon)
=D_{y_{11}}^{2}E(x_{11})-2\tr\left[\left(y_{12}y_{12}^{*}+y_{21}^{*}y_{21}\right)\log\rho_{11}\right]\;,
\ee
where $\rho_{11}\equiv x_{11}x_{11}^{*}$.
\end{proposition}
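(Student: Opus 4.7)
The plan is to apply the second-derivative formula~\eqref{se} directly to $x_\epsilon$ and $y$, decompose every object along the $(r,n-r)$ block structure of $x_\epsilon$, and track the $\epsilon\to 0$ asymptotics term by term. The matrix $\rho_\epsilon\equiv x_\epsilon x_\epsilon^*=\diag(\rho_{11},\epsilon^2 I_{n-r})$ has eigenvalues $p_1,\ldots,p_r$ together with $\epsilon^2$ of multiplicity $n-r$, and $\log\rho_\epsilon=\diag\bigl(\log\rho_{11},(\log\epsilon^2)I_{n-r}\bigr)$. A direct block computation using $y_{22}=0$ gives
\[
\gamma_0^\epsilon=\begin{pmatrix}x_{11}y_{11}^*+y_{11}x_{11}^* & x_{11}y_{21}^*+\epsilon y_{12}\\ \epsilon y_{12}^*+y_{21}x_{11}^* & 0\end{pmatrix},\quad\gamma_1^\epsilon=\begin{pmatrix}y_{11}y_{11}^*+y_{12}y_{12}^*-\rho_{11} & y_{11}y_{21}^*\\ y_{21}y_{11}^* & y_{21}y_{21}^*-\epsilon^2 I\end{pmatrix};
\]
in particular the light/light $(2,2)$ block of $\gamma_0^\epsilon$ is identically zero.

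Next I split the double sum in~\eqref{se} into four cases according to whether each of $j,k$ lies in the heavy set $\{1,\ldots,r\}$ or the light set $\{r+1,\ldots,n\}$. The heavy/heavy sum reproduces verbatim the expression for $D_{y_{11}}^2E(x_{11})$ arising from $\gamma_0^{(11)}\equiv x_{11}y_{11}^*+y_{11}x_{11}^*$, and the light/light sum vanishes since $(\gamma_0^\epsilon)_{jk}=0$ there. In a mixed case with $j\le r<k$ one has $(\gamma_0^\epsilon)_{jk}=\sqrt{p_j}\,\overline{(y_{21})_{k-r,j}}+\epsilon\,(y_{12})_{j,k-r}$, so $|(\gamma_0^\epsilon)_{jk}|^2=p_j|(y_{21})_{k-r,j}|^2+O(\epsilon)$, while
\[
\frac{\log p_j-\log\epsilon^2}{2(p_j-\epsilon^2)}=-\frac{\log\epsilon^2}{2p_j}+\frac{\log p_j}{2p_j}+O(\epsilon^2\log\epsilon).
\]
Summing the two mixed blocks (they give equal contributions by Hermiticity of $\gamma_0^\epsilon$ and symmetry of the coefficient) produces a divergent piece $-(\log\epsilon^2)\tr(y_{21}^*y_{21})$ together with a finite leftover $\tr(y_{21}^*y_{21}\log\rho_{11})$; the $O(\epsilon)$ cross contribution and the $O(\epsilon^2)|y_{12}|^2$ contribution survive against $\log\epsilon^2$ only as $o(1)$ remainders.

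This divergence is cancelled by the $(2,2)$-block of $-2\tr[\gamma_1^\epsilon\log\rho_\epsilon]$, which equals $-2\log\epsilon^2\bigl(\tr(y_{21}y_{21}^*)-(n-r)\epsilon^2\bigr)$ and so carries divergent part $-2(\log\epsilon^2)\tr(y_{21}^*y_{21})$, exactly opposite---after the overall prefactor $-2$ in~\eqref{se}---to the $+2(\log\epsilon^2)\tr(y_{21}^*y_{21})$ from the mixed-block sum. The $(1,1)$-block of the same trace contributes the finite pieces $-2\tr[(y_{11}y_{11}^*-\rho_{11})\log\rho_{11}]$ and $-2\tr[y_{12}y_{12}^*\log\rho_{11}]$. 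The heavy/heavy sum together with $-2\tr[(y_{11}y_{11}^*-\rho_{11})\log\rho_{11}]$ is precisely $D_{y_{11}}^2E(x_{11})$, while the surviving finite pieces add to $-2\tr[(y_{12}y_{12}^*+y_{21}^*y_{21})\log\rho_{11}]$, giving~\eqref{mainappen}.

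I expect the main obstacle to be the asymptotic bookkeeping in the mixed cases: one must confirm that each $|(\gamma_0^\epsilon)_{jk}|^2$ splits cleanly into an $O(1)$ $y_{21}$-piece (which produces both the cancelling divergence and the surviving $\log\rho_{11}$ trace), a negligible $O(\epsilon)$ cross term, and a negligible $O(\epsilon^2)$ $y_{12}$-piece, so that the large $\log\epsilon^2$ factor creates nothing beyond the two claimed traces. A minor subtlety is that $\tr(x_\epsilon x_\epsilon^*)=1+(n-r)\epsilon^2\neq 1$; one should either pre-normalize $x_\epsilon$ (which multiplies everything by $1+O(\epsilon^2)$) or use the unnormalized form of~\eqref{se}, and in either case the correction does not affect the limit.
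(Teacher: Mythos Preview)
Your proposal is correct and follows essentially the same route as the paper: apply formula~\eqref{se} to $x_\epsilon$, split the double sum by the $(r,n-r)$ block structure, and track the $\epsilon\to 0$ asymptotics. The only organizational difference is that the paper absorbs the piece $\tr[yy^*\log\rho_\epsilon]=\sum_{j,k}|y_{jk}|^2\log p_j$ into the summand (defining $G_{jk}=|y_{jk}|^2\log p_j+\frac{\log p_j-\log p_k}{2(p_j-p_k)}|(\gamma_0)_{jk}|^2$), so that each mixed pair $G_{jk}+G_{kj}$ is already finite as $\epsilon\to 0$; you instead keep $\tr[\gamma_1^\epsilon\log\rho_\epsilon]$ separate and exhibit the same $\log\epsilon^2$ cancellation between its $(2,2)$-block and the mixed sum---the bookkeeping is equivalent.
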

\begin{remark}
The contribution of the normalization factor of $x_\epsilon$ is of order $O(\epsilon^2)$ and therefore ignored here.
\end{remark}
\proof The proof is based on a straightforward calculation.
The expression for the second derivative given in Eq.(\ref{se}), can be written as:
$$
D_{y}^{2}E(x_\epsilon)  \equiv\frac{d^2}{dt^2}S(\rho_\epsilon(t))\Big|_{t=0}
=-2\left(S(\rho_\epsilon)+\sum_{j=1}^{n}\sum_{k=1}^{n}G_{jk}\right)\;,
$$
where $\rho_\epsilon\equiv x_\epsilon x^{*}_{\epsilon}$, $\gamma_0\equiv x_\epsilon y^*+yx_{\epsilon}^{*}$
\be\label{321}
G_{jk}\equiv|y_{jk}|^2\log p_j+ \frac{\log p_{j}-\log p_{k}}{2(p_j-p_k)}\left|(\gamma_0)_{jk}\right|^2\;.
\ee 

If both $j,k$ are smaller or equal to $r$, then clearly those $G_{jk}$ terms contribute to $D_{y_{11}}^{2}E(x_{11})$.
Now, if both $j>r$ and $k>r$ then $y_{jk}=0$ and we have $G_{jk}=0$. Hence, we get
\be\label{gres}
D_{y}^{2}E(x_\epsilon) =D_{y_{11}}^{2}E(x_{11})-2\sum_{j=r+1}^{n}\sum_{k=1}^{r}\left(G_{jk}+G_{kj}\right)
\ee
We therefore focus now on the expressions for $G_{jk}$ and $G_{kj}$ in the case $j>r$ and $k\leq r$.

Writing $x_\epsilon=\diag\{\sqrt{p_1},...,\sqrt{p_n}\}$ with $p_j=\epsilon^2$ for $j>r$ we have
\begin{align*}
(\gamma_{0})_{jk}&=\sqrt{p_j}\bar{y}_{kj}+\sqrt{p_k}y_{jk}=\sqrt{p_k}y_{jk}+O(\epsilon)\\
(\gamma_{0})_{kj}&=\sqrt{p_k}\bar{y}_{jk}+\sqrt{p_j}y_{kj}=\sqrt{p_k}\bar{y}_{jk}+O(\epsilon)\;,
\end{align*}
where the last equality was obtained by setting $p_j=\epsilon^2$. We therefore have 
$
\left|(\gamma_0)_{jk}\right|^2=\left|(\gamma_0)_{kj}\right|^2
$
up to $O(\epsilon)$. From the expressions above we get for  $j>r$ and $k\leq r$ the following formulas:
\begin{align*}
G_{jk}&=|y_{jk}|^2\log p_j+ \frac{\log p_{j}-\log p_{k}}{2(p_j-p_k)}\left(p_k|y_{jk}|^2+O(\epsilon)\right)\\
G_{kj}&=|y_{kj}|^2\log p_k+ \frac{\log p_{k}-\log p_{j}}{2(p_k-p_j)}\left(p_k|y_{jk}|^2+O(\epsilon)\right)
\end{align*}
Since $p_j=\epsilon^2$ and $p_k>0$, we have
\begin{align*}
G_{jk}&=|y_{jk}|^2\log \epsilon^2+ \frac{\log \epsilon^2-\log p_{k}}{2(\epsilon^2-p_k)}\left(p_k|y_{jk}|^2+O(\epsilon)\right)=\frac{1}{2}|y_{jk}|^2\log\epsilon^2+\frac{1}{2}|y_{jk}|^2\log p_k+O(\epsilon\log\epsilon)\\
G_{kj}&=|y_{kj}|^2\log p_k+ \frac{\log p_{k}-\log\epsilon^2}{2(p_k-\epsilon^2)}\left(p_k|y_{jk}|^2+O(\epsilon)\right)
=|y_{kj}|^2\log p_k+\frac{1}{2}|y_{jk}|^2\log p_k-\frac{1}{2}|y_{jk}|^2\log\epsilon^2+O(\epsilon\log\epsilon)
\end{align*}
Hence,
$$
G_{jk}+G_{kj}=|y_{jk}|^2\log p_k+|y_{kj}|^2\log p_k+O(\epsilon\log\epsilon)\;.
$$
By substituting this expression into Eq.(\ref{gres}) we get~(\ref{mainappen}). This completes the proof.\qed

\end{appendix}


\begin{references}

 \bibitem{AB04} K. M. R. Audenaert and S. L. Braunstein, Comm. Math. Phys., \textbf{246}, 443 (2004).

 \bibitem{AHW00} G. G. Amosov, A. S. Holevo, and R. F. Werner, Problems in Inf. Trans. \textbf{36}, 305 (2000).
 
 \bibitem{Ben99} C. H. Bennett, D. P. DiVincenzo, T. Mor, P. W. Shor, J. A. Smolin and B. M. Terhal,
 Phys. Rev. Lett. \textbf{82}, 5385 (1999).
 
 \bibitem{Bra11}
 F. G.S.L. Brandao, J. Eisert, M. Horodecki, and D. Yang, Phys. Rev. Lett. \textbf{106}, in press (2011). [quant-ph/1010.5074].
 
 \bibitem{Brandao} F. G.S.L. Brandao and M. Horodecki,
 Open Syst. Inf. Dyn. 17, 31 (2010).

 \bibitem{AIM08}
 H. Derksen, S. Friedland, G. Gour, D. Gross, L. Gurvits, A.
   Roy, and J. Yard.
 \newblock On minimum entropy output and the additivity conjecture.
 \newblock Notes of Quantum Information Group, American Institure for
   Mathematics workshop ``Geometry and representation theory of tensors for
   computer science, statistics and other areas", July 21-25, 2008.

 \bibitem{FHMV}
 M. Fannes, B. Haegeman, M. Mosonyi, and D. Vanpeteghem, quant-ph/0410195.
 

 \bibitem{Fri78}  Extremal eigenvalue problems, \emph{Bull. Brazilian Math.
  Soc.} 9 (1978), 13-40.

 \bibitem{Fri10} S. Friedland, \emph{Matrices}, A draft of a book in preparation, http://homepages.math.uic.edu/~friedlan/bookm.pdf
 
 \bibitem{FGA}  S. Friedland, G. Gour, A. Roy,  Local extrema of entropy functions under tensor products, \emph{Quantun Information and Computation}, Vol.11 No.11-12 pp1028-1044,.eprint: math-ph/1105.5380.
 
 \bibitem{Fuk10} M. Fukuda, C. King, and D. K. Moser,  Commun. Math. Phys. \textbf{296}, 111-143, 2010.
 
 \bibitem{GF11}
G. Gour and S. Friedland, arXiv:1105.6122 [quant-ph].
 
  \bibitem{GN}
 G. Gour and N. Wallach, Phys. Rev. A \textbf{76}, 042309 (2007).
 
 \bibitem{Has09}
 M. B. Hastings, Nature Physics \textbf{5}, 255 (2009).
 
 \bibitem{Holevo06} A. S. Holevo,
 \newblock The additivity problem in quantum information theory.
 \newblock In {\em International Congress of Mathematicians. Vol. III}, pages
 999--1018. Eur. Math. Soc., Z\"urich, 2006.
 
 \bibitem{HJ99} R.A. Horn and C.R. Johnson, \emph{Topics in Matrix Analysis}, Cambridge Univ. Press, 1999.
 
 \bibitem{HHH09} R. Horodecki, P. Horodecki, and M. Horodecki, Rev. Mod. Phys. \textbf{81}, 865 (2009).
 
 \bibitem{Kato} T. Kato, Perturbation Theory for Linear Operators,
 Springer-Verlag, 2nd ed., New York 1980.
   
 \bibitem{Kin02} C. King, J. Math. Phys. \textbf{43}, 4641 (2002).
 
 \bibitem{Kin03} C. King, Quant. Inf. Comp. \textbf{3}, 186 (2003).
 
 \bibitem{KR01} C. King and M. B. Ruskai, IEEE Trans. Inf. Theory \textbf{47}, 192 (2001). 
 
 \bibitem{OP04} M. Ohya and D. Petz, \emph{Quantum entropy and its use}, Springer (2004).
 
 \bibitem{Petz} D. Petz, \emph{Quantum information theory and quantum statistics}, Springer (2008).
 
 \bibitem{Maxim}
  M. E. Shirokov, Problems of Information Transmission, \textbf{42}, 23-40 (2006).
  
 \bibitem{Sho02} P. W. Shor, J. Math. Phys. \textbf{43}, 4334 (2002).
 
  \bibitem{Shor04} P. W. Shor.
 \newblock Equivalence of additivity questions in quantum information theory.
 \newblock {\em Comm. Math. Phys.}, 246(3):453--472, 2004.
 
 \bibitem{Yard08} G. Smith and J. Yard, Science \textbf{321}, 1812  (2008). 
 

 \bibitem{Yard}
 J. Yard, private communication.
\end{references}
\end{document}